\newtheorem{definitn}{Definition}
\newtheorem{thm}{Theorem}
\newtheorem{remrk}{Remark}
\newtheorem{prop}{Proposition}
\begin{document}

\title{Quantized Compressive Sensing }

\author{Wei Dai, Hoa Vinh Pham, and Olgica Milenkovic\\
 Department of Electrical and Computer Engineering\\
 University of Illinois at Urbana-Champaign}
\maketitle
\begin{abstract}
We study the average distortion introduced by scalar, vector, and
entropy coded quantization of compressive sensing (CS) measurements.
The asymptotic behavior of the underlying quantization schemes is
either quantified exactly or characterized via bounds. We adapt two
benchmark CS reconstruction algorithms to accommodate quantization
errors, and empirically demonstrate that these methods significantly
reduce the reconstruction distortion when compared to standard CS
techniques. 
\end{abstract}
\vspace{0.1in}

\renewcommand{\thefootnote}{\fnsymbol{footnote}} \footnotetext[1]{Part of the material in this paper was submitted to the IEEE Information Theory Workshop (ITW), 2009, and the IEEE International Symposium on Information Theory (ISIT), 2009.} \renewcommand{\thefootnote}{\arabic{footnote}} \setcounter{footnote}{0}

\section{\label{sec:Introduction}Introduction}

Compressive sensing (CS) is a linear sampling method that converts
unknown input signals, embedded in a high dimensional space, into
signals that lie in a space of significantly smaller dimension. In
general, it is not possible to uniquely recover an unknown signal
using measurements of reduced-dimensionality. Nevertheless, if the
input signal is sufficiently sparse, exact reconstruction is possible.
In this context, assume that the unknown signal $\mathbf{x}\in\mathbb{R}^{N}$
is $K$-sparse, i.e., that there are at most $K$ nonzero entries
in $\mathbf{x}$. A naive reconstruction method is to search among
all possible signals and find the sparsest one which is consistent
with the linear measurements. This method requires only $m=2K$ random
linear measurements, but finding the sparsest signal representation
is an NP-hard problem. On the other hand, Donoho and Candès et. al.
demonstrated in \cite{Donoho_IT2006_CompressedSensing,Candes_Tao_IT2005_decoding_linear_programming,Candes_Tao_FOCS05_Error_Correction_Linear_Programming,Candes_Tao_IT2006_Near_Optimal_Signal_Recovery}
that sparse signal reconstruction is a polynomial time problem if
more measurements are taken. This is achieved by casting the reconstruction
problem as a linear programming problem and solving it using the \emph{basis
pursuit (BP)} method. More recently, the authors proposed the \emph{subspace
pursuit (SP)} algorithm in \cite{Dai_2008_Subspace_Pursuit} (see
also the independent work \cite{Tropp2008_CoSamp} for a closely related
approach). The computational complexity of the SP algorithm is linear
in the signal dimension, and the required number of linear measurements
is of the same order as that for the BP method.

For most practical applications, it is reasonable to assume that the
measurements are quantized and therefore do not have infinite precision.
When the quantization error is bounded and known in advance, upper
bounds on the reconstruction distortion were derived for the BP method
in \cite{Candes_Tao_ApplMath2006_Stable_Signal_Recovery} and the
SP algorithm in \cite{Dai_2008_Subspace_Pursuit,Tropp2008_CoSamp},
respectively. For bounded compressible signals, which have transform
coefficients with magnitudes that decay according to a power law,
an upper bound on the reconstruction distortion introduced by a uniform
quantizer was derived in \cite{Candes2006_DCC_encoding_lp_ball}.
The same quantizer was studied in \cite{Baraniuk2008_quantization_sparse_representations}
for exactly $K$-sparse signals and it was shown that a large fraction
of quantization regions is not used \cite{Baraniuk2008_quantization_sparse_representations}.
All of the above approaches focus on the worst case analysis, or simple
one-bit quantization \cite{Braniuk2008_ciss_1bitquantization}. An
exception includes the overview paper \cite{Goyal2008_SPM_quantized_CS},
which focuses on the average performance of uniform quantizers, assuming
that the support set of the sparse signal is available at the quantizer.

As opposed to the worst case analysis, we consider the average distortion
introduced by quantization. We study the asymptotic distortion rate
functions for scalar quantization, entropy coded scalar quantization,
and vector quantization of the measurement vectors. Exact asymptotic
distortion rate functions are derived for scalar quantization when
both the measurement matrix and the sparse signals obey a certain
probabilistic model. Lower and upper bounds on the asymptotic distortion
rate functions are also derived for other quantization scenarios,
and the problem of compressive sensing matrix quantization is briefly
discussed as well. In addition, two benchmark CS reconstruction algorithms
are adapted to accommodate quantization errors. Simulations show that
the new algorithms offer significant performance improvement over
classical CS reconstruction techniques that do not take quantization
errors into consideration.

This paper is organized as follows. Section \ref{sec:Preliminaries}
contains a brief overview of CS theory, the BP and SP reconstruction
algorithms, and various quantization techniques. In Section \ref{sec:Distortion-Analysis},
we analyze the CS distortion rate function and examine the influence
of quantization errors on the BP and SP reconstruction algorithms.
In Section \ref{sec:Modified-Algorithm}, we describe two modifications
of the aforementioned algorithms, suitable for quantized data, that
offer significant performance improvements when compared to standard
BP and SP techniques. Simulation results are presented in Section
\ref{sec:Empirical-Results}.

\section{\label{sec:Preliminaries}Preliminaries}

\subsection{\label{sub:Compressive-Sensing}Compressive Sensing (CS)}

In CS, one encodes a signal $\mathbf{x}$ of dimension $N$ by computing
a measurement vector $\mathbf{y}$ of dimension of $m\ll N$ via linear
projections, i.e., \[
\mathbf{y}=\mathbf{\Phi}\mathbf{x},\]
 where $\mathbf{\Phi}\in\mathbb{R}^{m\times N}$ is referred to as
the \emph{measurement matrix}. In this paper, we assume that $\mathbf{x}\in\mathbb{R}^{N}$
is exactly $K$-sparse, i.e., that there are exactly $K$ entries
of $\mathbf{x}$ that are nonzero. The reconstruction problem is to
recover $\mathbf{x}$ given $\mathbf{y}$ and $\mathbf{\Phi}$.

The BP method is a technique that casts the reconstruction problem
as a $l_{1}$-regularized optimization problem, i.e., \begin{equation}
\min\;\left\Vert \mathbf{x}\right\Vert _{1}\;\mathrm{subject\; to}\;\mathbf{y}=\mathbf{\Phi}\mathbf{x},\label{eq:BP-standard}\end{equation}
 where $\left\Vert \mathbf{x}\right\Vert _{1}=\sum_{i=1}^{N}\left|x_{i}\right|$
denotes the $l_{1}$-norm of the vector $\mathbf{x}$. It is a convex
optimization problem and can be solved efficiently by linear programming
techniques. The reconstruction complexity equals $O\left(m^{2}N^{3/2}\right)$
if the convex optimization problem is solved using interior point
methods \cite{Nesterov_book1994_Interior_point_Convex_Programming}.

The computational complexity of CS reconstruction can be further reduced
by the SP algorithm, recently proposed by two research groups \cite{Dai_2008_Subspace_Pursuit,Tropp2008_CoSamp}.
It is an iterative algorithm drawing on the theory of list decoding.
The computational complexity of this algorithm is upper bounded by
$O\left(Km(N+K^{2})\right)$, which is significantly smaller than
the complexity of the BP method whenever $K\ll N$. See \cite{Dai_2008_Subspace_Pursuit}
for a detailed performance and complexity analysis of this greedy
algorithm.

A sufficient condition for both the BP and SP algorithms to perform
exact reconstruction is based on the so called restricted isometry
property (RIP) \cite{Candes_Tao_IT2005_decoding_linear_programming},
formally defined as follows.
\begin{definitn}
\label{def:RIP}\emph{(RIP).} A matrix $\mathbf{\Phi}\in\mathbb{R}^{m\times N}$
is said to satisfy the Restricted Isometry Property (RIP) with coefficients
$\left(K,\delta\right)$ for $K\le m$, $0\leq\delta\leq1$, if for
all index sets $I\subset\left\{ 1,\cdots,N\right\} $ such that $\left|I\right|\le K$
and for all $\mathbf{q}\in\mathbb{R}^{\left|I\right|}$, one has \[
\left(1-\delta\right)\left\Vert \mathbf{q}\right\Vert _{2}^{2}\le\left\Vert \mathbf{\Phi}_{I}\mathbf{q}\right\Vert _{2}^{2}\le\left(1+\delta\right)\left\Vert \mathbf{q}\right\Vert _{2}^{2}.\]

\end{definitn}
The RIP parameter $\delta_{K}$ is defined as the infimum of all parameters
$\delta$ for which the RIP holds, i.e., \begin{align}
\delta_{K} & :=\inf\left\{ \delta:\;\left(1-\delta\right)\left\Vert \mathbf{q}\right\Vert _{2}^{2}\le\left\Vert \mathbf{\Phi}_{I}\mathbf{q}\right\Vert _{2}^{2}\le\left(1+\delta\right)\left\Vert \mathbf{q}\right\Vert _{2}^{2},\right.\nonumber \\
 & \quad\quad\quad\quad\left.\;\forall\left|I\right|\le K,\;\forall\mathbf{q}\in\mathbb{R}^{\left|I\right|}\right\} .\label{eq:def-RIP-parameter}\end{align}

\vspace{0.02in}

It was shown in \cite{Candes_Tao_ApplMath2006_Stable_Signal_Recovery,Dai_2008_Subspace_Pursuit}
that both BP and SP algorithms lead to exact reconstructions of $K$-sparse
signals if the matrix $\mathbf{\Phi}$ satisfies the RIP with a constant
parameter, i.e., $\delta_{c_{1}K}\le c_{0}$ where both $c_{1}\in\mathbb{R}^{+}$
and $c_{0}\in\left(0,1\right)$ are constants independent of $K$
(although different algorithms may have different parameters $c_{0}$s
and $c_{1}$s). Most known families of matrices satisfying the RIP
property with optimal or near-optimal performance guarantees are random,
including Gaussian random matrices with i.i.d. $\mathcal{N}\left(0,1/m\right)$
entries, where $m\ge O\left(K\log N\right)$.

For completeness, we briefly describe the SP algorithm. For an index
set $T\subset\left\{ 1,2,\cdots,N\right\} $, let $\mathbf{\Phi}_{T}$
be the {}``truncated matrix'' consisting of the columns of $\mathbf{\Phi}$
indexed by $T$, and let $\mathrm{span}\left(\mathbf{\Phi}_{T}\right)$
denote the subspace in $\mathbb{R}^{m}$ spanned by the columns of
$\mathbf{\Phi}_{T}$. Suppose that $\mathbf{\Phi}_{T}^{*}\mathbf{\Phi}_{T}$
is invertible. For any given $\mathbf{y}\in\mathbb{R}^{m}$, the projection
of $\mathbf{y}$ onto $\mathrm{span}\left(\mathbf{\Phi}_{T}\right)$
is defined as\begin{align}
 & \mathbf{y}_{p}=\mathrm{proj}\left(\mathbf{y},\mathbf{\Phi}_{T}\right):=\mathbf{\Phi}_{T}\left(\mathbf{\Phi}_{T}^{*}\mathbf{\Phi}_{T}\right)^{-1}\mathbf{\Phi}_{T}^{*}\mathbf{y},\label{eq:def-proj}\end{align}
 where $\mathbf{\Phi}^{*}$ denotes the conjugate transpose of $\mathbf{\Phi}$.

The corresponding projection residue vector $\mathbf{y}_{r}$ and
projection coefficient vector $\mathbf{x}_{p}$ are defined as \begin{equation}
\mathbf{y}_{r}=\mathrm{resid}\left(\mathbf{y},\mathbf{\Phi}_{T}\right):=\mathbf{y}-\mathbf{y}_{p},\label{eq:def-proj-residue}\end{equation}
 and\begin{align}
 & \mathbf{x}_{p}=\mathrm{pcoeff}\left(\mathbf{y},\mathbf{\Phi}_{T}\right):=\left(\mathbf{\Phi}_{T}^{*}\mathbf{\Phi}_{T}\right)^{-1}\mathbf{\Phi}_{T}^{*}\mathbf{y}.\label{eq:def-proj-coeff}\end{align}
 The steps of the SP algorithm are summarized below.

\begin{algorithm}[H]
 \textbf{Input}: $K$, $\mathbf{\Phi}$, $\mathbf{y}$

\noindent \textbf{Initialization}: Let $T^{0}=\left\{ K^{\phantom{*}}\right.$indices
corresponding to entries of largest magnitude in $\left.\mathbf{\Phi}^{*}\mathbf{y}\right\} $
and $\mathbf{y}_{r}^{0}=\mathrm{resid}\left(\mathbf{y},\mathbf{\Phi}_{\hat{T}^{0}}\right)$.

\noindent \textbf{Iteration}: At the $\ell^{\mathrm{th}}$ iteration,
go through the following steps. 
\begin{enumerate}
\item $\tilde{T}^{\ell}=T^{\ell-1}\bigcup$$\left\{ K^{\phantom{*}}\right.$indices
corresponding to entries of largest magnitude in $\left.\mathbf{\Phi}^{*}\mathbf{y}_{r}^{\ell-1}\right\} $. 
\item Let $\mathbf{x}_{p}=\mathrm{pcoeff}\left(\mathbf{y},\mathbf{\Phi}_{\tilde{T}^{\ell}}\right)$
and $T^{\ell}=\left\{ K^{\phantom{*}}\right.$indices corresponding
to entries of largest magnitude in $\left.\mathbf{x}_{p}\right\} $. 
\item $\mathbf{y}_{r}^{\ell}=\mathrm{resid}\left(\mathbf{y},\mathbf{\Phi}_{T^{\ell}}\right).$ 
\item If $\left\Vert \mathbf{y}_{r}^{\ell}\right\Vert _{2}>\left\Vert \mathbf{y}_{r}^{\ell-1}\right\Vert _{2}$,
let $T^{\ell}=T^{\ell-1}$ and quit the iteration. 
\end{enumerate}
\textbf{Output}: The vector $\hat{\mathbf{x}}$ satisfying $\hat{\mathbf{x}}_{\left\{ 1,\cdots,N\right\} -T^{\ell}}=\mathbf{0}$
and $\hat{\mathbf{x}}_{T^{\ell}}=\mathrm{pcoeff}\left(\mathbf{y},\mathbf{\Phi}_{T^{\ell}}\right)$.

\caption{\label{alg:Subspace-Pursuit-Algorithm}The Subspace Pursuit (SP) Algorithm}

\end{algorithm}

In what follows, we study the performance of the SP and BP reconstruction
algorithms when the measurements are subjected to three different
quantization schemes. We also discuss the issue of quantizing the
measurement matrix values.

\subsection{\label{sub:Scalar-Quantization}Scalar and Vector Quantization}

Let $\mathcal{C}\subset\mathbb{R}^{m}$ be a finite discrete set,
referred to as a codebook. A quantizer is a mapping from $\mathbb{R}^{m}$
to the codebook $\mathcal{C}$ with the property that \begin{align}
\mathfrak{q}:\;\mathbb{R}^{m} & \rightarrow\mathcal{C}\nonumber \\
\mathbf{y} & \mapsto\bm{\omega}\in\mathcal{C}\;\mathrm{if}\;\mathbf{y}\in\mathcal{R}_{\bm{\omega}},\label{eq:scalar-quant}\end{align}
 where $\bm{\omega}$ is referred to as a \emph{level} and $\mathcal{R}_{\bm{\omega}}$
is the \emph{quantization region} corresponding to the level $\bm{\omega}$.
The performance of a quantizer is often described by its distortion-rate
function, defined as follows. Let the distortion measure be the squared
Euclidean distance (i.e., mean squared error (MSE)). For a random
source $\mathbf{Y}\in\mathbb{R}^{m}$, the distortion associated with
a quantizer $\mathfrak{q}$ is $D_{\mathfrak{q}}:=\mathrm{E}\left[\left\Vert \mathbf{Y}-\mathfrak{q}\left(\mathbf{Y}\right)\right\Vert _{2}^{2}\right]$.
For a given codebook $\mathcal{C}$, the optimal quantization function
that minimizes the Euclidean distortion measure is given by \[
\mathfrak{q}^{*}\left(\mathbf{Y}\right)=\underset{\bm{\omega}\in\mathcal{C}}{\arg\;\min}\;\left\Vert \mathbf{Y}-\bm{\omega}\right\Vert _{2}^{2}.\]
 As a result, the corresponding quantization region is given by \begin{equation}
\mathcal{R}_{\bm{\omega}}:=\left\{ \mathbf{y}\in\mathbb{R}^{m}:\;\left\Vert \mathbf{y}-\bm{\omega}\right\Vert _{2}^{2}\le\left\Vert \mathbf{y}-\bm{\omega}^{\prime}\right\Vert _{2}^{2},\;\forall\bm{\omega}^{\prime}\in\mathcal{C}\right\} ,\label{eq:def-quant-region}\end{equation}
 and the distortion associated with this codebook $\mathcal{C}$ equals
\[
D\left(\mathcal{C}\right):=\mathrm{E}\left[\left\Vert \mathbf{Y}-\mathfrak{q}^{*}\left(\mathbf{Y}\right)\right\Vert _{2}^{2}\right].\]
 Let $R:=\frac{1}{m}\log_{2}\left|\mathcal{C}\right|$ be the rate
of the codebook $\mathcal{C}$. For a given code rate $R$, the distortion
rate function is given by

\begin{equation}
D^{*}\left(R\right):=\underset{\mathcal{C}:\;\frac{1}{m}\log_{2}\left|\mathcal{C}\right|\le R}{\inf}\; D\left(\mathcal{C}\right).\label{eq:def-DRF-vector}\end{equation}
 For simplicity, assume that the random source $\mathbf{Y}$ does
not have mass points, and that the levels in the quantization codebook
are all distinct. With these assumptions, though different quantization
regions (\ref{eq:def-quant-region}) may overlap, the ties can be
broken arbitrarily as they happen with probability zero.

We study both vector quantization and scalar quantization. Scalar
quantization has lower computational complexity than vector quantization.
It is a special case of vector quantization when $m=1$. To distinguish
the two schemes, we use the subscripts $SQ$ and $VQ$ to refer to
scalar and vector quantization, respectively. For quantized compressive
sensing, we assume that the quantization functions for all the coordinate
of $\mathbf{Y}$ are the same. The corresponding distortion rate function
is therefore of the form \begin{equation}
D_{SQ}^{*}\left(R\right):=\underset{\mathcal{C}_{SQ}:\;\log_{2}\left|\mathcal{C}_{SQ}\right|\le R}{\inf}\mathrm{E}_{\mathbf{Y}}\left[\sum_{i=1}^{m}\left|Y_{i}-\mathfrak{q}_{SQ}\left(Y_{i}\right)\right|^{2}\right].\label{eq:def-DRF-scalar}\end{equation}

Necessary conditions for optimal scalar quantizer design can be found
in \cite{Lloyd1982_IT_quantization}. The quantization region for
the level $\omega_{i}\in\mathcal{C}$, $i=1,2,\cdots,2^{R}$, can
be written in the form $\mathcal{R}_{\omega_{i}}=\overline{\left(t_{i-1},t_{i}\right)}$,
where $t_{i-1},t_{i}\in\mathbb{R}\bigcup\left\{ -\infty\right\} \bigcup\left\{ +\infty\right\} $
and $\overline{\left(t_{i-1},t_{i}\right)}$ is the closure of the
open interval $\left(t_{i-1},t_{i}\right)$. An optimal quantizer
satisfies the following conditions: 
\begin{enumerate}
\item If the optimal quantizer has levels $\omega_{i-1}$ and $\omega_{i}$,
then the threshold that minimizes the mean square error (MSE) is \begin{equation}
t_{i}=\frac{1}{2}\left(\omega_{i}+\omega_{i+1}\right).\label{eq:thresh-update}\end{equation}

\item If the optimal quantizer has thresholds $t_{i-1}$ and $t_{i}$, then
the level that minimizes the MSE is \begin{equation}
\omega_{i}=\mathrm{E}\left[Y|Y\in\overline{\left(t_{i-1},t_{i}\right)}\right].\label{eq:center-update}\end{equation}

\end{enumerate}
\vspace{0.02in}

Lloyd's algorithm \cite{Lloyd1982_IT_quantization} for quantizer
codebook design is based on the above necessary conditions. Lloyd's
algorithm starts with an initial codebook, and then in each iteration,
computes the thresholds $t_{i}$s according to (\ref{eq:thresh-update})
and updates the codebook via (\ref{eq:center-update}). Although Lloyd's
algorithm is not guaranteed to find a global optimum for the quantization
regions, it produces locally optimal codebooks.

As a low-complexity alternative to non-uniform quantizers, uniform
scalar quantizers are widely used in practice. A uniform scalar quantizer
is associated with a {}``uniform codebook'' $\mathcal{C}_{u,SQ}=\left\{ \omega_{1}<\omega_{2}<\cdots<\omega_{M}\right\} ,$
for which $\omega_{i}-\omega_{i-1}=\omega_{j}-\omega_{j-1}$ for all
$1<i\ne j\le M$. The difference between adjacent levels is often
referred to as the step size, and denoted by $\Delta_{u,SQ}$. The
corresponding distortion rate function is given by \begin{align}
D_{u,SQ}^{*}\left(R\right) & :=\underset{\mathcal{C}_{u,SQ}:\;\log_{2}\left|\mathcal{C}_{u,SQ}\right|\le R}{\inf}\nonumber \\
 & \qquad\quad\mathrm{E}_{\mathbf{Y}}\left[\sum_{i=1}^{m}\left|Y_{i}-\mathfrak{q}_{SQ}\left(Y_{i}\right)\right|^{2}\right].\label{eq:def-DRF-uniform-scalar}\end{align}
 where $\mathcal{C}_{SQ}$ in (\ref{eq:def-DRF-scalar}) is replaced
by $\mathcal{C}_{u,SQ}$.

Definitions (\ref{eq:def-DRF-scalar}) and (\ref{eq:def-DRF-uniform-scalar})
are consistent with (\ref{eq:def-DRF-vector}) as a Cartesian product
of scalar quantizers can be viewed as a special form of a vector quantizer.

\section{\label{sec:Distortion-Analysis}Distortion Analysis}

We analyze the asymptotic behavior of the distortion rate functions
introduced in the previous section. We assume that the quantization
codebook $\mathcal{C}$, for both scalar and vector quantization,
is designed offline and fixed when the measurements are taken.

\subsection{Distortion of Scalar Quantization}

For scalar quantization, we consider the following two CS scenarios.

\begin{flushleft}
\textbf{Assumptions I}: 
\par\end{flushleft}
\begin{enumerate}
\item Let $\mathbf{\Phi}=\frac{1}{\sqrt{m}}\mathbf{A}\in\mathbb{R}^{m\times N}$,
where the entries of $\mathbf{A}$ are i.i.d. Subgaussian random variables%
\footnote{A random variable $X$ is said to be \emph{Subgaussian} if there exist
positive constants $c_{1}$ and $c_{2}$ such that \[
\Pr\left(\left|X\right|>x\right)\le c_{1}e^{-c_{2}x^{2}}\quad\forall x>0.\]

One property of Subgaussian distributions is that they have a well
defined moment generating function. Note that the Gaussian and Bernoulli
distributions are special cases of the Subgaussian distribution. %
} with zero mean and unit variance. 
\item Let $\mathbf{X}\in\mathbb{R}^{N}$ be an exactly $K$-sparse vector,
that is, a signal that has exactly $K$ nonzero entries. We assume
that the nonzero entries of $\mathbf{X}$ are i.i.d. Subgaussian random
variables with zero mean and unit variance, although more general
models can be analyzed in a similar manner. 
\end{enumerate}
\vspace{0.01in}

\begin{flushleft}
\textbf{Assumptions II}: Assume that $\mathbf{X}\in\mathbb{R}^{n}$
is exactly $K$-sparse, and that the nonzero entries of $\mathbf{X}$
are i.i.d. standard Gaussian random variables. 
\par\end{flushleft}

\vspace{0.01in}

The asymptotic distortion-rate function of the measurement vector
under the first CS scenario is characterized in Theorem \ref{thm:DRF-Gaussian-Matrix}. 
\begin{thm}
\label{thm:DRF-Gaussian-Matrix}Suppose that Assumptions I hold. Then
\begin{equation}
\underset{R\rightarrow\infty}{\lim}\underset{\left(K,m,N\right)\rightarrow\infty}{\lim}\;\frac{2^{2R}}{K}D_{SQ}^{*}\left(R\right)=\frac{\pi\sqrt{3}}{2},\label{eq:DRF-Gaussian-non-uniform}\end{equation}
 and \begin{equation}
\underset{R\rightarrow\infty}{\lim}\underset{\left(K,m,N\right)\rightarrow\infty}{\lim}\;\frac{2^{2R}}{KR}D_{u,SQ}^{*}\left(R\right)=\frac{4}{3}\ln2.\label{eq:DRF-Gaussian-uniform}\end{equation}

\end{thm}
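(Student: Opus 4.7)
The plan is to reduce the problem to the classical high-rate scalar quantization asymptotics for a single Gaussian source, in three conceptual steps, with the scaling of the signal dimensions absorbed into the variance.

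First I would establish that every coordinate of the measurement vector $\mathbf{Y}=\mathbf{\Phi}\mathbf{X}$ is asymptotically Gaussian. Writing $Y_i=\frac{1}{\sqrt m}\sum_{j\in S}A_{ij}X_j$, where $S=\mathrm{supp}(\mathbf{X})$ has cardinality $K$, each summand is a product of two independent zero-mean, unit-variance Subgaussian random variables, hence is itself centered, unit-variance, and Subgaussian. The Lindeberg--Feller CLT therefore gives $Y_i/\sqrt{K/m}\Rightarrow\mathcal{N}(0,1)$ as $K\to\infty$, and the uniform Subgaussian tail bound (see the footnote in the paper) upgrades this weak convergence to convergence in $L^p$ for every $p\ge 1$.

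Next I would exploit that the scalar quantizer in (\ref{eq:def-DRF-scalar}) and (\ref{eq:def-DRF-uniform-scalar}) is applied identically to each coordinate, so $D_{SQ}^*(R)=m\cdot\inf_{\mathfrak q}\mathrm{E}[(Y_1-\mathfrak q(Y_1))^2]$, where the infimum ranges over rate-$R$ scalar quantizers. Rescaling to $\tilde Y_1=\sqrt{m/K}\,Y_1$ and matching each quantizer for $Y_1$ with its scaled counterpart for $\tilde Y_1$ turns this into $D_{SQ}^*(R)=K\cdot\inf_{\mathfrak q}\mathrm{E}[(\tilde Y_1-\mathfrak q(\tilde Y_1))^2]$. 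Combined with the $L^p$ convergence from the previous step and the continuity of the optimal rate-$R$ distortion with respect to the source law, this yields, for each fixed $R$,
\begin{equation*}
\lim_{(K,m,N)\to\infty}\frac{D_{SQ}^*(R)}{K}=D^*_{SQ,\mathcal{N}}(R),\qquad\lim_{(K,m,N)\to\infty}\frac{D_{u,SQ}^*(R)}{K}=D^*_{u,SQ,\mathcal{N}}(R),
\end{equation*}
the optimal distortions for scalar and uniform scalar quantization of a standard Gaussian source. The outer limit $R\to\infty$ is then handled by classical high-rate formulas. The Panter--Dite formula applied to $f(y)=(2\pi)^{-1/2}e^{-y^2/2}$ gives
\begin{equation*}
\lim_{R\to\infty}2^{2R}\,D^*_{SQ,\mathcal{N}}(R)=\frac{1}{12}\Bigl(\int f^{1/3}(y)\,dy\Bigr)^3=\frac{1}{12}\cdot 6\sqrt 3\,\pi=\frac{\pi\sqrt 3}{2},
\end{equation*}
which is (\ref{eq:DRF-Gaussian-non-uniform}). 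For the uniform quantizer with $M=2^R$ levels and step $\Delta$, I would balance granular distortion $\Delta^2/12$ against tail overload $\approx 2\int_V^\infty(y-V)^2f(y)\,dy$ with $V=M\Delta/2$; optimizing over $V$ gives $V^2\sim 4R\ln 2$, overload of order $2^{-2R}$ only, and granular contribution $\sim\frac{4\ln 2}{3}R\,2^{-2R}$, which produces (\ref{eq:DRF-Gaussian-uniform}).

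The main obstacle is the continuity claim used in the second step: weak (or even $L^p$) convergence of the source does not by itself imply convergence of the \emph{infimum} over all rate-$R$ scalar quantizers, since codebook points could in principle escape to infinity or two codepoints could collapse. What saves the argument is the uniform Subgaussian tail bound from the first step, which both confines the optimal codebook to a common compact region and controls the overload contribution uniformly in $(K,m,N)$. Once these uniform tail estimates are in place, continuity of $D^*_{SQ}(R;\cdot)$ and $D^*_{u,SQ}(R;\cdot)$ at the Gaussian limit follows by standard compactness arguments, and the order of limits stated in the theorem (inner $(K,m,N)\to\infty$, outer $R\to\infty$) poses no further difficulty.
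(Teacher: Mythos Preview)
Your proposal is correct and follows essentially the same route as the paper's proof: reduce each scaled coordinate $\sqrt{m/K}\,Y_i$ to a standard Gaussian via the CLT, rewrite $D_{SQ}^*(R)/K$ as the single-source scalar-quantization distortion of that rescaled variable, and then invoke the known high-rate constants $\pi\sqrt{3}/2$ and $(4/3)\ln 2$ for Gaussian sources. If anything, you are more careful than the paper, which simply asserts weak convergence and then applies the Gaussian distortion formula without addressing the continuity of $D^*_{SQ}(R;\cdot)$ in the source law; your identification of this gap and the uniform-Subgaussian-tail fix is a genuine improvement in rigor over the paper's own argument.
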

\vspace{0.01in}

The proof is based on the fact that the distributions of $\sqrt{\frac{m}{K}}Y_{i}$,
$1\le i\le m$, weakly converge to standard Gaussian distributions.
The detailed description is given in Appendix \ref{sub:Pf-Thm-Gaussian-scalar}.

To study the scenario described by Assumptions II, we need the following
definitions. For a given matrix $\mathbf{\Phi}$, let \begin{equation}
\mu_{1}:=\frac{1}{N}\sum_{i\in\left[m\right],j\in\left[N\right]}\varphi_{i,j}^{2},\label{eq:def-mu-1}\end{equation}
 and \begin{equation}
\mu_{2}:=\underset{i\in\left[m\right],T\in{\left[N\right] \choose K}}{\max}\frac{m}{K}\sum_{j\in T}\varphi_{i,j}^{2},\label{eq:def-mu-2}\end{equation}
 where $\left[m\right]=\left\{ 1,2,\cdots,m\right\} $ and ${\left[N\right] \choose K}$
denotes the set of all subsets of $\left[N\right]$ with cardinality
$K$. Note that if the matrix $\mathbf{\Phi}$ is generated from the
random ensemble described in Assumption I.1), then $\mu_{1}\in\left(1-\epsilon,1+\epsilon\right)$
with high probability, for all $\epsilon>0$, and whenever $m$ and
$N$ are sufficiently large. It is straightforward to verify that
$\mu_{1}\le\mu_{2}$.

With these definitions at hand, bounds on the distortion rate function
can be described as below. 
\begin{thm}
\label{thm:DRF-Matrix}Suppose that Assumption II holds. Then\begin{align}
 & \frac{\pi\sqrt{3}}{2}\mu_{1}\le\underset{R\rightarrow\infty}{\lim\inf}\frac{2^{2R}}{K}D_{SQ}^{*}\left(R\right)\nonumber \\
 & \quad\le\underset{R\rightarrow\infty}{\lim\sup}\frac{2^{2R}}{K}D_{SQ}^{*}\left(R\right)\le\frac{\pi\sqrt{3}}{2}\mu_{2},\label{eq:DRF-non-uniform}\end{align}
 and \begin{equation}
\frac{4\ln2}{3}\mu_{1}\le\underset{R\rightarrow\infty}{\lim\inf}\frac{2^{2R}}{KR}D_{u,SQ}^{*}\left(R\right).\label{eq:DRF-uniform}\end{equation}

\end{thm}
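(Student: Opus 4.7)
The plan is to exploit the fact that, under Assumption II, each measurement $Y_i = \sum_{j\in T}\varphi_{i,j}X_j$ is, conditional on the (uniformly random) support $T\in\binom{[N]}{K}$ of $\mathbf{X}$, a centered Gaussian with variance $\sigma_i(T)^2 := \sum_{j\in T}\varphi_{i,j}^2$; hence each marginal $p_{Y_i}$ is a Gaussian mixture. Two bookkeeping identities follow immediately from the definitions of $\mu_1,\mu_2$: averaging over the uniform support gives $\sum_i \mathrm{E}_T[\sigma_i(T)^2] = \tfrac{K}{N}\sum_{i,j}\varphi_{i,j}^2 = K\mu_1$, while the pointwise bound $\sigma_i(T)^2 \le K\mu_2/m$ holds for every $i,T$. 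Since a single scalar codebook must be used for all $m$ coordinates, $D^*_{SQ}(R) \ge \sum_{i=1}^m D^*_{SQ,i}(R)$, where $D^*_{SQ,i}(R)$ is the optimal scalar-quantizer distortion for the single source $Y_i$; the analogous inequality holds for uniform quantizers.

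For the lower bound in (\ref{eq:DRF-non-uniform}), I would invoke the Zador asymptotic $D^*_{SQ,i}(R) \sim \tfrac{1}{12}\,2^{-2R}\!\left(\int p_{Y_i}^{1/3}\right)^{3}$ and then the reverse Minkowski inequality for the $L^{1/3}$ quasi-norm, which states that $\left(\int (f+g)^{1/3}\right)^{3} \ge \left(\int f^{1/3}\right)^{3} + \left(\int g^{1/3}\right)^{3}$ for nonnegative $f,g$. Iterating over the mixture components of $p_{Y_i}$ yields $\left(\int p_{Y_i}^{1/3}\right)^{3} \ge \sum_T \Pr(T)\left(\int p_{Y_i\mid T}^{1/3}\right)^{3} = 6\sqrt{3}\pi\,\mathrm{E}_T[\sigma_i(T)^2]$, since a direct Gaussian computation gives $\left(\int p_\sigma^{1/3}\right)^{3} = 6\sqrt{3}\pi\sigma^2$. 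Summing over $i$ and dividing by $K$ produces $\liminf_R \tfrac{2^{2R}}{K} D^*_{SQ}(R) \ge \tfrac{\pi\sqrt{3}}{2}\mu_1$.

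For the upper bound in (\ref{eq:DRF-non-uniform}), I would exhibit one Lloyd--Max codebook $\mathfrak{q}$ matched to $\mathcal{N}(0,\sigma_0^2)$ with $\sigma_0^2 := K\mu_2/m$; its high-rate point density is $\lambda(y)\propto p_{\sigma_0}(y)^{1/3}$. Bennett's formula then gives the asymptotic distortion of $\mathfrak{q}$ on a source of density $f$ as proportional to $2^{-2R}\int f(y)\,p_{\sigma_0}(y)^{-2/3}\,dy$. Specializing $f = p_\sigma$ with $\sigma^2 \le \sigma_0^2$ and evaluating the resulting Gaussian integral in closed form yields $(2\pi)^{1/3}\sigma_0^{5/3}\sqrt{3/(3\sigma_0^2 - 2\sigma^2)}$, which is monotonically increasing in $\sigma^2$ on $[0,\sigma_0^2]$ and equals $(6\sqrt{3}\pi\sigma_0^2)^{1/3}$ at $\sigma = \sigma_0$. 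Hence, for every $i$ and $T$, $\mathrm{E}[(Y_i - \mathfrak{q}(Y_i))^2 \mid T] \le \tfrac{\pi\sqrt{3}}{2}(K\mu_2/m)\,2^{-2R}$ asymptotically, and summing over $T$ and $i$ yields $D^*_{SQ}(R) \le \tfrac{\pi\sqrt{3}}{2}K\mu_2\,2^{-2R}$.

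For the uniform-quantization lower bound (\ref{eq:DRF-uniform}), I would repeat the strategy of the second paragraph using the classical asymptotic $D^*_{u,SQ}(R;\mathcal{N}(0,\sigma^2)) \sim \tfrac{4\ln 2}{3}R\sigma^2\,2^{-2R}$ for uniform quantization of a Gaussian source (the extra $R$-factor reflects the optimal granular half-width scaling as $\sigma\sqrt{2R\ln 2}$, which balances the granular distortion against the Gaussian-tail overload). Since the conditional distortion given $T$ of any uniform codebook is at least the optimal uniform-quantizer distortion for $\mathcal{N}(0,\sigma_i(T)^2)$, averaging over $T$, summing over $i$, and invoking $\sum_i \mathrm{E}_T[\sigma_i(T)^2] = K\mu_1$ produces the claimed bound. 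The most delicate step throughout is the upper bound in the third paragraph: one must carry out the Gaussian-monotonicity calculation that identifies $\sigma = \sigma_0$ as the worst variance (which is exactly why $\mu_2$, rather than $\mu_1$, appears on the right-hand side) and verify that Bennett/Zador asymptotics apply uniformly across the finite family $\{\mathcal{N}(0,\sigma_i(T)^2)\}_{i,T}$; once those two points are secured, the rest is bookkeeping.
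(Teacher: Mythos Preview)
Your proposal is correct and, for the upper bound in (\ref{eq:DRF-non-uniform}) and the uniform lower bound (\ref{eq:DRF-uniform}), essentially coincides with the paper: the paper also designs a Lloyd--Max quantizer for the ``worst'' Gaussian variance $\mu_2$ and invokes a mismatched-Bennett calculation (its Proposition~\ref{pro:ub-mismatch} is exactly your monotonicity-in-$\sigma^2$ computation, stated for general dimension $k$), and the uniform bound is handled by both via the Hui--Neuhoff asymptotic $D^*_{u}\sim \tfrac{4\ln 2}{3}\sigma^2 R\,2^{-2R}$ averaged over $(i,T)$.

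The genuine difference is in the \emph{lower} bound of (\ref{eq:DRF-non-uniform}). You first apply Zador to the \emph{mixture} density $p_{Y_i}$ and then invoke the reverse Minkowski inequality in $L^{1/3}$ to split $\bigl(\int p_{Y_i}^{1/3}\bigr)^3$ over mixture components. The paper reverses the order: it conditions on $T$ first, notes the triviality $\mathrm{E}_T\bigl[\mathrm{E}[(Y_i-\mathfrak q(Y_i))^2\mid T]\bigr]\ge \mathrm{E}_T\bigl[\min_{\mathfrak q'}\mathrm{E}[(Y_i-\mathfrak q'(Y_i))^2\mid T]\bigr]$, and only then applies Zador to each conditional Gaussian. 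Both routes land at $\tfrac{1}{12}\mathrm{E}_T[6\sqrt{3}\pi\,\sigma_i(T)^2]$, but the paper's ordering is more elementary---it needs Zador only for Gaussians and no $L^p$ inequality, and it sidesteps a mild technicality in your route (if some $\sigma_i(T)=0$ the mixture acquires an atom and Zador for $p_{Y_i}$ is not literally applicable, whereas the conditioning argument simply contributes zero from that term). Your route, on the other hand, makes transparent that the lower bound is really a convexity statement about the Zador functional $f\mapsto\|f\|_{1/3}$, which is a nice structural observation the paper's direct argument hides.
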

\vspace{0.01in}

The detailed proof is postponed to Appendix \ref{sub:Pf-lb-ub}. Here,
we sketch the basic ideas behind the proof. In order to construct
a lower bound, suppose that one has prior information about the support
set $T$ before taking the measurements. For a given value of $i$
and for a given $T$, we calculate the corresponding asymptotic distortion-rate
function. The lower bound is obtained by taking the average of these
distortion-rate functions over all possible values of $i$ and $T$.
For the upper bound, we design a sequence of sub-optimal scalar quantizers,
then apply them to all measurement components, and finally construct
a uniform upper bound on their asymptotic distortion-rate functions,
valid for all $i$ and $T$. The uniform upper bound is given in (\ref{eq:DRF-non-uniform}). 
\begin{remrk}
Our results are based on the fundamental assumption that the sparsity
level $K$ is known in advance and that the statistics of the sparse
vector $\mathbf{x}$ is specified. Very frequently, however, this
is not the case in practice. If we relax Assumptions I and II further
by assuming that $K$ is sufficiently large, it will often be the
case that the statistics of the measurement $Y_{i}$ is well approximated
by a Gaussian distribution. Here, note that different $Y_{i}$ variables
may have different variances and these variances are generally unknown
in advance. The problem of statistical mismatch has been analyzed
in the proof of the upper bound (\ref{eq:DRF-non-uniform}) (see Proposition
\ref{pro:ub-mismatch} of Appendix \ref{sub:Pf-lb-ub} for details).
In particular, non-uniform quantization with slightly over-estimated
variance performs better than that with under-estimated variance \cite[Chapter 8.6]{Sayood2005_book_data_compression}.
 \vspace{0.01in}

\end{remrk}
According to Theorem \ref{thm:DRF-Gaussian-Matrix}, if the quantization
rate $R$ is sufficiently large, the distortion of the optimal non-uniform
quantizer is approximately only $1/R$ of that of the optimal uniform
quantizer. This gap can be closed by using entropy coding techniques
in conjunction with uniform quantizers.

\subsection{Uniform Scalar Quantization with Entropy Encoding}

Let $\mathcal{B}_{enc}=\left\{ \bm{v}_{1},\bm{v}_{2},\cdots,\bm{v}_{M}\right\} $
be a binary codebook, where the codewords $\bm{v}_{i}$, $1\le i\le M$,
are finite-length strings over the binary field with elements $\left\{ 0,1\right\} $.
The codebook $\mathcal{B}_{enc}$ can, in general, contain codewords
of variable length - i.e., the lengths of different codewords are
allowed to be different. Let $\ell_{i}$ be the length of codeword
$\bm{v}_{i}$, $i=1,2,\cdots,M$. Then $\bm{v}_{i}\in\left\{ 0,1\right\} ^{\ell_{i}\times1}$.
For a given quantization codebook $\mathcal{C}=\left\{ \omega_{1},\omega_{2},\cdots,\omega_{M}\right\} $,
the encoding function $\mathfrak{f}_{enc}$ is a mapping from the
quantization codebook $\mathcal{C}$ to the binary codebook $\mathcal{B}_{enc}$,
i.e., $\mathfrak{f}_{enc}\left(\omega\right)=\bm{v}\in\mathcal{B}_{enc}$.
The extension $\mathfrak{f}_{enc}^{*}$ is a mapping from finite length
strings of $\mathcal{C}$ to finite length strings of $\mathcal{B}_{enc}$
(a concatenation of the corresponding binary codewords): \[
\mathfrak{f}_{enc}^{*}\left(\omega_{i_{1}}\omega_{i_{2}}\cdots\omega_{i_{s}}\right)=\mathfrak{f}_{enc}\left(\omega_{i_{1}}\right)\mathfrak{f}_{enc}\left(\omega_{i_{2}}\right)\cdots\mathfrak{f}_{enc}\left(\omega_{i_{s}}\right).\]
 The code $\mathcal{B}_{enc}$ is called \emph{uniquely decodable}
if any concatenation of binary codewords $\bm{v}_{i_{1}}\bm{v}_{i_{2}}\cdots\bm{v}_{i_{s}}$
has only one possible preimage string $\omega_{j_{1}}\omega_{j_{2}}\cdots\omega_{j_{s}}$
producing it. In practice, the code $\mathcal{B}_{enc}$ is often
chosen to be a \emph{prefix} code, that is, no codeword is a prefix
of any other codeword. A prefix code can be uniquely decoded as the
end of a codeword is immediately recognizable without checking future
encoded bits.

We consider the case in which scalar quantization is followed by variable-length
encoding. The corresponding expected encoding length $\bar{L}$ is
defined by \[
\bar{L}=\mathrm{E}_{Y}\left[\mathfrak{L}\circ\mathfrak{f}_{enc}\circ\mathfrak{q}_{SQ}\left(Y\right)\right],\]
 where $\mathfrak{L}\left(\bm{v}\right)$ outputs the length of the
encoding codeword $\bm{v}\in\mathcal{B}_{enc}$. The goal is to \emph{jointly}
design $\mathfrak{q}_{SQ}$ and $\mathfrak{f}_{enc}$ to minimize
the expected encoding length $\bar{L}$. We are interested in the
distortion rate function defined by \begin{equation}
D_{enc}^{*}\left(R\right):=\underset{\bar{L}\le R}{\inf}\;\mathrm{E}_{\mathbf{Y}}\left[\sum_{i=1}^{m}\left|Y_{i}-\mathfrak{q}_{SQ}\left(Y_{i}\right)\right|^{2}\right].\label{eq:DRF-encoding}\end{equation}

\begin{thm}
\label{thm:Encoding}Suppose that Assumptions I hold. Then\begin{align*}
\frac{\pi e}{6} & \le\underset{R\rightarrow\infty}{\lim\inf}\underset{\left(K,m,N\right)\rightarrow\infty}{\lim\inf}\frac{2^{2R}}{K}D_{enc}^{*}\left(R\right)\\
 & \le\underset{R\rightarrow\infty}{\lim\sup}\underset{\left(K,m,N\right)\rightarrow\infty}{\lim\sup}\frac{2^{2R}}{K}D_{enc}^{*}\left(R\right)\le\frac{\pi e}{3},\end{align*}
 and the upper bound is achieved by a uniform scalar quantizer with
\[
\underset{R\rightarrow\infty}{\lim}\underset{\left(K,m,N\right)\rightarrow\infty}{\lim}\sqrt{\frac{m}{2\pi eK}}2^{R}\Delta_{u,SQ}=1,\]
 followed by Huffmann encoding. 
\end{thm}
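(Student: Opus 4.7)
The plan is to mirror the strategy of Theorem \ref{thm:DRF-Gaussian-Matrix}, reducing the multi-dimensional problem to a per-coordinate one through the asymptotic Gaussianity of the measurements and then invoking classical high-rate entropy-constrained scalar quantization results. Under Assumptions I, each coordinate $Y_i = \frac{1}{\sqrt m}\sum_{j:\,X_j\ne 0} A_{ij}X_j$ is a normalized sum of $K$ i.i.d.\ zero-mean unit-variance Subgaussian products; by the central limit theorem $\sqrt{m/K}\,Y_i$ converges in distribution to $\mathcal{N}(0,1)$ as $K\to\infty$, and the Subgaussian tails allow the corresponding differential entropies to converge, so that $h(Y_i)=\tfrac{1}{2}\log_2(2\pi e\,K/m)+o(1)$. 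Because (\ref{eq:DRF-encoding}) forces a single scalar codebook and binary encoder to be applied to every coordinate, and because the marginals are asymptotically identical, the problem decouples as $D_{enc}^{*}(R)=m\cdot d_{enc}^{*}(R)$, where $d_{enc}^{*}(R)$ is the per-coordinate entropy-constrained distortion.

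For the lower bound I would invoke the Gish--Pierce/Zador asymptotic identity for entropy-constrained scalar quantization, which states that for any scalar source $Y$ with finite differential entropy,
\[
\lim_{R\to\infty} 2^{2R}\,d_{enc}^{*}(R)\;\ge\;\frac{1}{12}\cdot 2^{2h(Y)},
\]
where $1/12$ is the one-dimensional Zador--Gersho coefficient (equivalently, the Shannon bound inflated by the scalar space-filling loss $2\pi e/12$). Substituting the Gaussian-limit entropy yields $d_{enc}^{*}(R)\ge (\pi eK)/(6m)\cdot 2^{-2R}\bigl(1+o(1)\bigr)$, and multiplying by $m$ and dividing by $K$ produces $\liminf 2^{2R}D_{enc}^{*}(R)/K\ge \pi e/6$.

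For the upper bound I would exhibit the uniform scalar quantizer with step size $\Delta_{u,SQ}$ satisfying $\sqrt{m/(2\pi eK)}\,2^{R}\Delta_{u,SQ}\to 1$, i.e.\ $\Delta_{u,SQ}\sim 2^{-R}\sqrt{2\pi eK/m}$, followed by Huffman encoding of the quantizer output stream. A standard high-rate expansion yields per-coordinate entropy $H(\mathfrak{q}_{SQ}(Y_i))=h(Y_i)-\log_2\Delta_{u,SQ}+o(1)=R+o(1)$ and per-coordinate distortion $\Delta_{u,SQ}^{2}/12=(\pi eK)/(6m)\cdot 2^{-2R}\bigl(1+o(1)\bigr)$. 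The remaining ingredient is to pay the Huffman overhead: since $\bar L\le H+1$ for symbolwise Huffman (and $\bar L\le H+1/n$ for block-Huffman with blocks of length $n$), a bounded inflation of $\Delta_{u,SQ}$ is required to enforce $\bar L\le R$, and this multiplies the per-coordinate distortion by at most a factor of two; summing over $m$ coordinates and dividing by $K$ gives the claimed upper bound $\pi e/3$.

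The main obstacle is the \emph{uniformity} of the Gish--Pierce/Zador asymptotics as the source itself varies with $(K,m,N)$. The identities $H(\mathfrak{q}_\Delta(Y))=h(Y)-\log_2\Delta+o(1)$ and $\mathrm{E}\bigl[(Y-\mathfrak{q}_\Delta(Y))^{2}\bigr]=\Delta^{2}/12+o(\Delta^{2})$ are typically proved for a fixed smooth density as $\Delta\to 0$, whereas here the density of $Y_i$ only converges (in, say, $L^\infty$ or total variation) to the Gaussian density as $(K,m,N)\to\infty$; one must therefore either establish quantitative error terms in these asymptotics that remain controlled across the approximating family, or carefully exchange the two limits, taking $(K,m,N)\to\infty$ first to land on a Gaussian source and then $R\to\infty$, and justify that the resulting double limit controls the pre-limit expression. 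A secondary obstacle is the precise accounting of the Huffman redundancy, which must be sharpened (e.g.\ by short-block coding giving $\bar L\le H+1/n$) in order to pin down the constant $\pi e/3$ rather than the cruder value that follows from the naive $\bar L\le H+1$ bound alone.
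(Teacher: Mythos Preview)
Your proposal is correct and follows essentially the same route as the paper's (very brief) proof: reduce each coordinate to a unit-variance Gaussian via the CLT, invoke the Gish--Pierce identity $D\ge\tfrac{1}{12}\,2^{2(h-H)}$ with equality for the uniform quantizer, and convert the entropy constraint into a rate constraint via the Huffman sandwich $H\le\bar L\le H+1$. The paper is in fact no more explicit than you are about how the upper constant $\pi e/3$ (as opposed to the $2\pi e/3$ one gets from the naive one-bit redundancy) arises, so the ``secondary obstacle'' you flag is genuine but is simply not addressed in the paper's own argument.
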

\vspace{0.01in}

\begin{proof}
Given a quantization function, Huffmann encoding gives an optimal
prefix code that minimizes $\bar{L}$ \cite[Chapter 5]{Cover1991_book_information_theory}.
Let $p_{i}=\Pr\left(Y:\;\mathfrak{q}\left(Y\right)=\omega_{i}\right)$
and let $\ell_{i}$ be the length of encoded codeword $\mathfrak{f}_{enc}\left(\omega_{i}\right)$.
Let $H:=\sum_{i=1}^{M}-p_{i}\log_{2}p_{i}$. Then $H\le\bar{L}=\sum_{i}p_{i}\ell_{i}\le H+1$.
In addition, it is well known that the distortion of scalar quantization
of a Gaussian source is lower bounded by $\frac{1}{12}2^{2\left(h-H\right)}\left(1+o_{H}\left(1\right)\right)$,
where $h$ denotes the differential entropy of the source, and the
lower bound is achieved by a uniform quantizer. Calculating $h$ and
interpreting $H$ as a function of $\bar{L}$ establish the claimed
result. 
\end{proof}
\vspace{0.01in}

As expected, for a given average description length, the average distortion
of uniform scalar quantization and Huffmann encoding is smaller than
that of an optimal scalar quantizer with fixed length encoding.

\subsection{Distortion of Vector Quantization}

For the purpose of analyzing vector quantization schemes, we make
the following assumptions.

\begin{flushleft}
\textbf{Assumptions III}: 
\par\end{flushleft}
\begin{enumerate}
\item Let $\mathbf{\Phi}\in\mathbb{R}^{m\times N}$ be a matrix satisfying
the RIP with parameter $\delta_{K}\in\left(0,1\right)$. 
\item Assume that $\mathbf{X}\in\mathbb{R}^{n}$ is exactly $K$-sparse,
and that the nonzero entries of $\mathbf{X}$ are i.i.d. standard
Gaussian random variables. 
\end{enumerate}
\vspace{0.01in}

\begin{thm}
\label{thm:DRF-vector}Suppose that Assumptions III hold. Then\begin{align}
 & \left(1-\delta_{K}\right)\left(1+o_{K}\left(1\right)\right)\le\underset{R\rightarrow\infty}{\lim\inf}\frac{2^{2Rm/K}}{K}D_{VQ}^{*}\left(R\right)\label{eq:lb-DRF-VQ}\\
 & \quad\le\underset{R\rightarrow\infty}{\lim\sup}\frac{2^{2R}}{m}D_{VQ}^{*}\left(R\right)\le\left(1+\delta_{K}\right)\left(1+o_{m}\left(1\right)\right),\label{eq:ub-DRF-VQ-1}\end{align}
 where $o_{K}\left(1\right)\overset{K\rightarrow\infty}{\rightarrow}0$
and $o_{m}\left(1\right)\overset{m\rightarrow\infty}{\rightarrow}0$.
Another upper bound on $D_{VQ}^{*}\left(R\right)$ is given by \begin{equation}
\underset{R\rightarrow\infty}{\lim\sup}\frac{2^{2R}}{K}D_{VQ}^{*}\left(R\right)\le\frac{\pi\sqrt{3}}{2}\mu_{2},\label{eq:ub-DRF-VQ-2}\end{equation}
 where $\mu_{2}$ is as defined in (\ref{eq:def-mu-2}). 
\end{thm}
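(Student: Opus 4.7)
The plan is to establish the three inequalities separately, with the second upper bound (\ref{eq:ub-DRF-VQ-2}) being essentially free. Since any per-coordinate scalar quantizer with codebook of size $2^R$ yields a product codebook of size $2^{mR}$ in $\mathbb{R}^m$ that respects the VQ rate constraint $\frac{1}{m}\log_2\left|\mathcal{C}\right|\le R$, one has $D_{VQ}^{*}(R)\le D_{SQ}^{*}(R)$. Because Assumptions III imply Assumptions II, Theorem \ref{thm:DRF-Matrix} applies and (\ref{eq:ub-DRF-VQ-2}) follows immediately from (\ref{eq:DRF-non-uniform}).

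For the lower bound (\ref{eq:lb-DRF-VQ}), I would combine Shannon's rate-distortion converse with a side-information argument. Conditional on the support $T$ of $\mathbf{X}$, the measurement vector is Gaussian, $\mathbf{Y}\,|\,T\sim\mathcal{N}(\mathbf{0},\mathbf{\Phi}_T\mathbf{\Phi}_T^{*})$, and the $K$ nonzero eigenvalues $\lambda_{i}(T)$ of $\mathbf{\Phi}_T\mathbf{\Phi}_T^{*}$ all lie in $[1-\delta_K,\,1+\delta_K]$ by the RIP, since these eigenvalues coincide with those of the Gram matrix $\mathbf{\Phi}_T^{*}\mathbf{\Phi}_T$. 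Revealing $T$ to both encoder and decoder cannot increase the achievable distortion, yielding $D_{VQ}^{*}(R)\ge\mathrm{E}_T\!\left[D_{VQ,T}^{*}(R)\right]\ge\mathrm{E}_T\!\left[D_{RD,T}(mR)\right]$. By reverse water-filling, the conditional Shannon rate-distortion function has the high-rate form $D_{RD,T}(mR)=K\,\left(\prod_{i}\lambda_{i}(T)\right)^{1/K}\,2^{-2mR/K}\,(1+o_R(1))$. Bounding the geometric mean below by $1-\delta_K$ via RIP and averaging over $T$ yields (\ref{eq:lb-DRF-VQ}), with the $o_K(1)$ factor absorbing the gap between one-shot operational VQ and Shannon's asymptotic rate-distortion; this gap closes as the effective dimension $K$ grows because Gersho's coefficient tends to unity in the high-dimensional limit.

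For the first upper bound (\ref{eq:ub-DRF-VQ-1}), the plan is an achievability argument based on the Gaussian rate-distortion function. Because the Gaussian distribution maximizes the information rate-distortion function among all distributions sharing a fixed covariance, one has $D_{VQ,\mathbf{Y}}^{*}(R)\le D_{VQ,\mathrm{Gauss}(\Sigma_{\mathbf{Y}})}^{*}(R)\,(1+o_m(1))$, where $\Sigma_{\mathbf{Y}}=\mathrm{E}[\mathbf{Y}\mathbf{Y}^{*}]$ and the $o_m(1)$ term reflects the convergence of one-shot VQ to Shannon's bound as $m\to\infty$. In that limit, optimal VQ of a Gaussian source with covariance $\Sigma$ at per-sample rate $R$ attains $m\,(\det\Sigma)^{1/m}\,2^{-2R}\,(1+o_m(1))$. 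Applying the AM-GM inequality $(\det\Sigma_{\mathbf{Y}})^{1/m}\le\mathrm{tr}(\Sigma_{\mathbf{Y}})/m$ together with the RIP-derived trace bound $\mathrm{tr}(\Sigma_{\mathbf{Y}})=\mathrm{E}\!\left[\left\Vert\mathbf{Y}\right\Vert_{2}^{2}\right]\le(1+\delta_K)\mathrm{E}\!\left[\left\Vert\mathbf{X}\right\Vert_{2}^{2}\right]=(1+\delta_K)K\le(1+\delta_K)m$ delivers $D_{VQ}^{*}(R)\le(1+\delta_K)\,m\cdot 2^{-2R}\,(1+o_m(1))$, which is exactly (\ref{eq:ub-DRF-VQ-1}) after dividing by $m$.

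The principal obstacle is (\ref{eq:ub-DRF-VQ-1}), whose justification hinges on two nontrivial facts: the high-dimensional equivalence of one-shot VQ with Shannon rate-distortion (via the limiting Zador-Gersho coefficient), and the Gaussian-maximizes-rate-distortion principle, which allows the covariance to substitute for a full distributional description of the mixture source $\mathbf{Y}$. The lower bound is a relatively routine converse once the RIP is in play, and (\ref{eq:ub-DRF-VQ-2}) is immediate from Theorem \ref{thm:DRF-Matrix}.
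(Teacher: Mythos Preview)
Your handling of (\ref{eq:ub-DRF-VQ-2}) matches the paper's: a Cartesian product of scalar quantizers is a vector quantizer at the same per-dimension rate, so $D_{VQ}^{*}\le D_{SQ}^{*}$ and Theorem~\ref{thm:DRF-Matrix} finishes it.

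Your lower-bound argument for (\ref{eq:lb-DRF-VQ}) is valid but differs from the paper's. You invoke the Shannon rate--distortion converse after revealing $T$, obtaining $\frac{2^{2mR/K}}{K}D_{VQ}^{*}(R)\ge(\prod_{i}\lambda_{i}(T))^{1/K}\ge 1-\delta_{K}$ directly; no Gersho correction is needed, and the $(1+o_{K}(1))$ in the statement is then simply $1$. The paper instead uses the high-rate Zador--Gersho formula (\ref{eq:DRF-point-density-Gaussian}) for the \emph{operational} one-shot VQ of the $K$-dimensional Gaussian $\mathbf{U}_{T,K}^{*}\mathbf{Y}$, giving $c_{K}\,|2\pi\mathbf{\Lambda}_{T,K}|^{1/K}\left(\frac{K+2}{K}\right)^{(K+2)/2}$. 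Since $c_{K}\cdot 2\pi\cdot\left(\frac{K+2}{K}\right)^{(K+2)/2}\ge 1$ with equality in the limit $K\to\infty$, the paper's bound is strictly sharper for finite $K$; your Shannon route is shorter but coarser. Your remark that ``the $o_{K}(1)$ absorbs the gap'' is backwards: Shannon already lower-bounds operational VQ, so no gap needs absorbing.

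The genuine problem is your achievability argument for (\ref{eq:ub-DRF-VQ-1}). The step $D_{VQ,\mathbf{Y}}^{*}(R)\le D_{VQ,\mathrm{Gauss}(\Sigma_{\mathbf{Y}})}^{*}(R)(1+o_{m}(1))$ is not justified. The fact that a Gaussian source maximizes the \emph{information} rate--distortion function among sources of a given covariance does not imply the corresponding inequality for one-shot operational VQ; you would need to show that a codebook designed for $\mathcal{N}(\mathbf{0},\Sigma_{\mathbf{Y}})$, applied to the actual mixture source $\mathbf{Y}$, has distortion no larger than on the Gaussian. Worse, $\mathbf{Y}$ has no density on $\mathbb{R}^{m}$: conditional on $T$ it is supported on the $K$-dimensional range of $\mathbf{\Phi}_{T}$, so unconditionally it lives on a finite union of $K$-planes of Lebesgue measure zero. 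Neither the Zador formula nor the ``Gaussian is worst case'' principle applies directly to such a source, and designing for $\mathcal{N}(\mathbf{0},\Sigma_{\mathbf{Y}})$ with $\Sigma_{\mathbf{Y}}=\mathrm{E}_{T}[\mathbf{\Phi}_{T}\mathbf{\Phi}_{T}^{*}]$ does not guarantee $\mathbf{\Phi}_{T}\mathbf{\Phi}_{T}^{*}\le\Sigma_{\mathbf{Y}}$ for every $T$, so a per-$T$ mismatch bound is unavailable with that design covariance.

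The paper's fix is to condition on $T$ first and design instead for the \emph{isotropic} dominating Gaussian $\mathcal{N}(\mathbf{0},(1+\delta_{K}+\epsilon)\mathbf{I}_{m})$. By RIP one has $\mathbf{\Phi}_{T}\mathbf{\Phi}_{T}^{*}<(1+\delta_{K}+\epsilon)\mathbf{I}_{m}$ for every $T$, and Proposition~\ref{pro:ub-mismatch} (which explicitly handles the degenerate case $|\mathbf{\Sigma}_{0}|=0$ via Fatou) bounds the distortion of this fixed quantizer on each $\mathbf{Y}\,|\,T$ by $c_{m}\,|2\pi(1+\delta_{K}+\epsilon)\mathbf{I}_{m}|^{1/m}\left(\frac{m+2}{m}\right)^{(m+2)/2}=(1+\delta_{K}+\epsilon)(1+o_{m}(1))$. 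Averaging over $T$ and sending $\epsilon\downarrow 0$ gives (\ref{eq:ub-DRF-VQ-1}). Your AM--GM/trace route is not needed; the domination $\mathbf{\Sigma}_{0}<\mathbf{\Sigma}_{1}$ in Proposition~\ref{pro:ub-mismatch} replaces it.
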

\vspace{0.01in}

\begin{remrk}
The comparison of the two upper bounds in (\ref{eq:ub-DRF-VQ-1})
and (\ref{eq:ub-DRF-VQ-2}) depends on the ratio between $m$ and
$K$. Consider the case where $N=\beta K$, $m=\Theta\left(K\log\left(N/K\right)\right)=\alpha K$
for some $\alpha,\beta>1$. The first upper bound becomes \[
\underset{R\rightarrow\infty}{\lim\sup}\frac{2^{2R}}{K}D_{VQ}^{*}\left(R\right)\le\alpha\left(1+\delta_{K}\right)\left(1+o_{m}\left(1\right)\right).\]
It is smaller than the second upper bound if and only if \[
\delta_{K}<\frac{\pi\sqrt{3}}{2\alpha}\mu_{2}-1.\]

\vspace{0.01in}

\end{remrk}
The upper bound (\ref{eq:ub-DRF-VQ-2}) is obtained by using the Cartesian
product of scalar quantizers and invoking the result in (\ref{eq:DRF-non-uniform}).
The bounds (\ref{eq:lb-DRF-VQ}) and (\ref{eq:ub-DRF-VQ-1}) are proved
in Appendix \ref{sub:Pf-lb-ub}. The basic ideas behind the proof
are similar to those used for proving Theorem \ref{thm:DRF-Matrix}:
the lower bound is obtained by averaging the distortions of optimal
quantizers for every $T\in{\left[N\right] \choose K}$, while the
upper bound is a uniform upper bound on the distortions of quantizers
constructed for all $T\in{\left[N\right] \choose K}$.

Note that the lower bound in (\ref{eq:lb-DRF-VQ}) is not achievable
when $K<m$. The upper bounds (\ref{eq:ub-DRF-VQ-1}) and (\ref{eq:ub-DRF-VQ-2})
do not guarantee significant distortion reduction of vector quantization
compared with scalar quantization. Due to their inherently high computational
complexity, vector quantizers do not offer clear advantages that justify
their use in practice.

\subsection{CS Measurement Matrix Quantization Effects}

In CS theory, the measurement matrix is generated either randomly
or by some deterministic construction. Examples include Gaussian random
matrices and the deterministic construction based on Vandermonde matrices
\cite{Tarokh2007_ISIT_CS_ReedSolomon,Shokrollahi2009_bit_precision_CS}.
In both examples, the matrix entries typically have infinite precision,
which is not the case in practice. It is therefore also plausible
to study the effect of quantization of CS measurement matrix. 

Consider Assumption I where the measurement matrix is randomly generated.
Let us assume that every entry $\varphi_{i,j}$, $1\le i\le m$ and
$1\le j\le N$, is quantized using a finite number of bits. Note that
$\hat{\varphi}_{i,j}=\mathfrak{q}\left(\varphi_{i,j}\right)$ is a
bounded random variable and therefore Subgaussian distributed. The
results in Theorem \ref{thm:DRF-Gaussian-Matrix} are therefore automatically
valid for quantized matrices as well. 

Suppose that the measurement matrix is constructed deterministically
and then quantized using a finite number of bits. The parameters $\mu_{1}$,
$\mu_{2}$ and $\delta_{K}$ of the quantized measurement matrix can
be computed according to (\ref{eq:def-mu-1}), (\ref{eq:def-mu-2})
and (\ref{eq:def-RIP-parameter}), respectively. The results regarding
scalar quantization and vector quantization described in Theorems
\ref{thm:DRF-Matrix} and \ref{thm:DRF-vector} can be easily seen
to hold in this case as well.

\subsection{Reconstruction Distortion}

Based on the results of the previous section, we are ready to quantify
the reconstruction distortion of BP and SP methods introduced by quantization
error.

It is well known from CS literature that the reconstruction distortion
is dependent on the distortion in the measurements. Consider the quantized
CS given by \[
\hat{\mathbf{Y}}=\mathfrak{q}\left(\mathbf{Y}\right)=\mathbf{\Phi}\mathbf{X}+\mathbf{E},\]
 and where $\mathbf{E}\in\mathbb{R}^{m}$ denotes the quantization
error. Let $\hat{\mathbf{X}}$ be the reconstructed signal based on
the quantized measurements $\hat{\mathbf{Y}}$. Then the reconstruction
distortion can be upper bounded by \begin{equation}
\left\Vert \mathbf{X}-\hat{\mathbf{X}}\right\Vert _{2}^{2}\le c^{2}\left\Vert \mathbf{E}\right\Vert _{2}^{2},\label{eq:ub-reconst-dist}\end{equation}
 where the constant $c$ differs for different reconstruction algorithms.
The best bounding constant for the BP method was given in \cite{Candes_Tao_ApplMath2006_Stable_Signal_Recovery},
and equals \[
c_{bp}=\frac{4}{\sqrt{3-3\delta_{4K}}-\sqrt{1+\delta_{4K}}},\]
 while for the SP algorithm, the constant was estimated in \cite{Dai_2008_Subspace_Pursuit}
\[
c_{sp}=\frac{1+\delta_{3K}+\delta_{3K}^{2}}{\delta_{3K}\left(1-\delta_{3K}\right)}.\]

A lower bound on the reconstruction distortion is given as follows.
Suppose that the support set $T$ of the sparse signal $\mathbf{x}$
is perfectly reconstructed. The reconstructed signal $\hat{\mathbf{X}}$
is given by \[
\hat{\mathbf{X}}=\left(\mathbf{\Phi}_{T}^{*}\mathbf{\Phi}_{T}\right)^{-1}\mathbf{\Phi}_{T}\hat{\mathbf{Y}},\]
and the reconstruction distortion is lower bounded by \begin{equation}
\left\Vert \hat{\mathbf{X}}-\mathbf{X}\right\Vert _{2}^{2}\ge\left(\frac{\sqrt{1-\delta_{K}}}{1+\delta_{K}}\right)^{2}\left\Vert \hat{\mathbf{Y}}-\mathbf{Y}\right\Vert _{2}^{2}=\frac{1-\delta_{K}}{\left(1+\delta_{K}\right)^{2}}\left\Vert \mathbf{E}\right\Vert _{2}^{2}.\label{eq:lb-reconst-dist}\end{equation}
For short, let \[
c_{lb}=\frac{\sqrt{1-\delta_{K}}}{1+\delta_{K}}.\]

Combining the bounds (\ref{eq:ub-reconst-dist},\ref{eq:lb-reconst-dist})
and the results in Theorems \ref{thm:DRF-Gaussian-Matrix}-\ref{thm:DRF-vector},
we summarize the asymptotic bounds on the reconstruction distortion
as follows. Under Assumptions I, the reconstruction distortion of
scalar quantization is bounded by \begin{align*}
c_{lb}^{2}\frac{\pi\sqrt{3}}{2} & \le\underset{R\rightarrow\infty}{\lim}\underset{\left(K,m,N\right)\rightarrow\infty}{\lim}\frac{2^{2R}}{K}\mathrm{E}\left[\left\Vert \hat{\mathbf{X}}-\mathbf{X}\right\Vert _{2}^{2}\right]\\
 & \le\begin{cases}
c_{sp}^{2}\frac{\pi\sqrt{3}}{2} & \mathrm{for\; subspace\; algorithm}\\
c_{bp}^{2}\frac{\pi\sqrt{3}}{2} & \mathrm{for\; basis\; pursuit\; algorithm}\end{cases},\end{align*}
and the reconstruction distortion of uniform scalar quantization is
bounded by \begin{align*}
c_{lb}^{2}\frac{4\log2}{3} & \le\underset{R\rightarrow\infty}{\lim}\underset{\left(K,m,N\right)\rightarrow\infty}{\lim}\frac{2^{2R}}{KR}\mathrm{E}\left[\left\Vert \hat{\mathbf{X}}-\mathbf{X}\right\Vert _{2}^{2}\right]\\
 & \le\begin{cases}
c_{sp}^{2}\frac{4\log2}{3} & \mathrm{for\; subspace\; algorithm}\\
c_{bp}^{2}\frac{4\log2}{3} & \mathrm{for\; basis\; pursuit\; algorithm}\end{cases}.\end{align*}
Suppose that Assumption II holds. The reconstruction distortions for
scalar quantization and uniform scalar quantization are respectively
bounded by\begin{align*}
c_{lb}^{2}\frac{\pi\sqrt{3}}{2}\mu_{1} & \le\underset{R\rightarrow\infty}{\lim\inf}\frac{2^{2R}}{K}\mathrm{E}\left[\left\Vert \hat{\mathbf{X}}-\mathbf{X}\right\Vert _{2}^{2}\right]\\
 & \le\underset{R\rightarrow\infty}{\lim\sup}\frac{2^{2R}}{K}\mathrm{E}\left[\left\Vert \hat{\mathbf{X}}-\mathbf{X}\right\Vert _{2}^{2}\right]\\
 & \le\begin{cases}
c_{sp}^{2}\frac{\pi\sqrt{3}}{2}\mu_{2} & \mathrm{for\; subspace\; algorithm}\\
c_{bp}^{2}\frac{\pi\sqrt{3}}{2}\mu_{2} & \mathrm{for\; basis\; pursuit\; algorithm}\end{cases}\end{align*}
and \[
c_{lb}^{2}\frac{4\log2}{3}\mu_{1}\le\underset{R\rightarrow\infty}{\lim\inf}\frac{2^{2R}}{KR}\mathrm{E}\left[\left\Vert \hat{\mathbf{X}}-\mathbf{X}\right\Vert _{2}^{2}\right].\]
Given the encoding rate $R$ per measurement, the reconstruction distortion
of the optimal scalar quantizer is bounded as \begin{align*}
c_{lb}^{2}\frac{\pi e}{6} & \le\underset{R\rightarrow\infty}{\lim\inf}\underset{\left(K,m,N\right)\rightarrow\infty}{\lim\inf}\frac{2^{2R}}{K}\mathrm{E}\left[\left\Vert \hat{\mathbf{X}}-\mathbf{X}\right\Vert _{2}^{2}\right]\\
 & \le\underset{R\rightarrow\infty}{\lim\sup}\underset{\left(K,m,N\right)\rightarrow\infty}{\lim\sup}\frac{2^{2R}}{K}\mathrm{E}\left[\left\Vert \hat{\mathbf{X}}-\mathbf{X}\right\Vert _{2}^{2}\right]\\
 & \le\begin{cases}
c_{sp}^{2}\frac{\pi e}{3} & \mathrm{for\; subspace\; algorithm}\\
c_{bp}^{2}\frac{\pi e}{3} & \mathrm{for\; basis\; pursuit\; algorithm}\end{cases}.\end{align*}
The bounds for reconstruction distortion associated with vector quantization
are given by \begin{align*}
 & c_{lb}^{2}\left(1-\delta_{K}\right)\left(1+o_{K}\left(1\right)\right)\\
 & \le\underset{R\rightarrow\infty}{\lim\inf}\frac{2^{2Rm/K}}{K}\mathrm{E}\left[\left\Vert \hat{\mathbf{X}}-\mathbf{X}\right\Vert _{2}^{2}\right]\\
 & \le\underset{R\rightarrow\infty}{\lim\sup}\frac{2^{2R}}{m}\mathrm{E}\left[\left\Vert \hat{\mathbf{X}}-\mathbf{X}\right\Vert _{2}^{2}\right]\\
 & \le\begin{cases}
c_{sp}^{2}\left(1+\delta_{K}\right)\left(1+o_{m}\left(1\right)\right) & \mathrm{for\; subspace\; algorithm}\\
c_{bp}^{2}\left(1+\delta_{K}\right)\left(1+o_{m}\left(1\right)\right) & \mathrm{for\; basis\; pursuit\; algorithm}\end{cases},\end{align*}
and \begin{align*}
 & \underset{R\rightarrow\infty}{\lim\sup}\frac{2^{2R}}{K}\mathrm{E}\left[\left\Vert \hat{\mathbf{X}}-\mathbf{X}\right\Vert _{2}^{2}\right]\\
 & \le\begin{cases}
c_{sp}^{2}\frac{\pi\sqrt{3}}{2}\mu_{2} & \mathrm{for\; subspace\; algorithm}\\
c_{bp}^{2}\frac{\pi\sqrt{3}}{2}\mu_{2} & \mathrm{for\; basis\; pursuit\; algorithm}\end{cases}.\end{align*}

It is worth noting that the upper bound (\ref{eq:ub-reconst-dist})
on reconstruction distortion may not be tight. Empirical experiments
show that this upper bound often significantly over-estimates the
reconstruction distortion \cite{Candes_Tao_ApplMath2006_Stable_Signal_Recovery,Dai_2008_Subspace_Pursuit}.

\section{\label{sec:Modified-Algorithm}Reconstruction Algorithms for Quantized
CS}

We present next modifications of BP and SP algorithms that take into
account quantization effects.

To describe these algorithms, we find the following notation useful.
Let $\hat{\mathbf{Y}}$ be the quantized measurement vector. Given
a vector $\hat{\mathbf{Y}}$, the corresponding quantization region
can be easily identified: the quantization region of vector quantization
$\mathcal{R}_{\hat{\mathbf{Y}}}$ is defined in (\ref{eq:def-quant-region});
that of scalar quantization is given by the Cartesian product of the
quantization regions for each coordinate, i.e., $\mathcal{R}_{\hat{\mathbf{Y}}}=\prod_{i=1}^{m}\mathcal{R}_{\hat{Y}_{i}}$
where $\mathcal{R}_{\hat{Y}_{i}}$ is the quantization region of $\hat{Y}_{i}$. 

Similar to the standard BP method, the reconstruction problem can
be now casted as \begin{equation}
\min\left\Vert \mathbf{x}\right\Vert _{1}\;\mathrm{subject\; to}\;\mathbf{\Phi}\mathbf{x}\in\mathcal{R}_{\hat{\mathbf{Y}}}.\label{eq:BP-modified}\end{equation}
 It can be verified that $\mathcal{R}_{\hat{\mathbf{Y}}}$ is a closed
convex set and therefore (\ref{eq:BP-modified}) is a convex optimization
problem and can be efficiently solved by linear programming techniques.

In order to adapt the SP algorithm to the quantization scenario at
hand, we describe first a geometric interpretation of the projection
operation in the SP algorithm. Given $\mathbf{y}\in\mathbb{R}^{m}$
and $\mathbf{\Phi}_{T}\in\mathbb{R}^{m\times\left|T\right|}$, suppose
that $\mathbf{\Phi}_{T}$ has full column rank, in other words, suppose
that the columns of $\mathbf{\Phi}_{T}$ are linearly independent.
The projection operation in (\ref{eq:def-proj}) is equivalent to
the optimization problem \begin{equation}
\underset{\mathbf{x}\in\mathbb{R}^{\left|T\right|}}{\min}\left\Vert \mathbf{y}-\mathbf{\Phi}_{T}\mathbf{x}\right\Vert _{2}^{2}.\label{eq:proj-optim}\end{equation}
 Let $\mathbf{x}^{*}$ be the solution of the quadratic optimization
problem (\ref{eq:proj-optim}). Then functions (\ref{eq:def-proj}-\ref{eq:def-proj-coeff})
are equivalent to $\mathrm{proj}\left(\mathbf{y},\mathbf{\Phi}_{T}\right)=\mathbf{\Phi}_{T}\mathbf{x}^{*}$,
$\mathrm{resid}\left(\mathbf{y},\mathbf{\Phi}_{T}\right)=\mathbf{y}-\mathbf{\Phi}_{T}\mathbf{x}^{*}$
and $\mathrm{pcoeff}\left(\mathbf{y},\mathbf{\Phi}_{T}\right)=\mathbf{x}^{*}$.

The modified SP algorithm is based on the above geometric interpretation.
More precisely, we use the following definition. 
\begin{definitn}
\label{def:x-y-stars}For given $\mathbf{\Phi}_{T}\in\mathbb{R}^{m\times\left|T\right|}$,
$\hat{\mathbf{Y}}$ and $\mathcal{R}_{\hat{\mathbf{Y}}}$, define\begin{align}
\mathcal{Q} & :=\left\{ \left(\mathbf{x},\mathbf{y}\right)\in\mathbb{R}^{\left|T\right|}\times\mathcal{R}_{\hat{\mathbf{Y}}}:\right.\nonumber \\
 & \left.\left\Vert \mathbf{y}-\mathbf{\Phi}_{T}\mathbf{x}\right\Vert _{2}\le\left\Vert \mathbf{y}^{\prime}-\mathbf{\Phi}_{T}\mathbf{x}^{\prime}\right\Vert _{2}\;\forall\left(\mathbf{x}^{\prime},\mathbf{y}^{\prime}\right)\in\mathbb{R}^{\left|T\right|}\times\mathcal{R}_{\hat{\mathbf{Y}}}\right\} ,\label{eq:def-proj-q-plane}\end{align}
 and \begin{equation}
\left(\tilde{\mathbf{x}},\tilde{\mathbf{y}}\right)=\underset{\left(\mathbf{x},\mathbf{y}\right)\in\mathcal{Q}}{\arg\min}\left\Vert \mathbf{y}-\hat{\mathbf{Y}}\right\Vert _{2}.\label{eq:def-proj-q-solution}\end{equation}

\end{definitn}
\vspace{0.01in}

It can be verified that the pair $\left(\tilde{\mathbf{x}},\tilde{\mathbf{y}}\right)$
is well defined. See Appendix \ref{sub:pf-well-define} for details.

This definition is introduced to identify the best approximation for
$\hat{\mathbf{Y}}$ among multiple points in $\mathcal{R}_{\hat{\mathbf{Y}}}$
that minimize $\left\Vert \mathbf{y}-\mathbf{\Phi}_{T}\mathbf{x}\right\Vert _{2}$.
Based on this definition, we replace the $\mathrm{resid}$ and $\mathrm{pcoeff}$
functions in Algorithm \ref{alg:Subspace-Pursuit-Algorithm} with
new functions \[
\mathrm{resid}^{\left(q\right)}\left(\hat{\mathbf{Y}},\mathbf{\Phi}_{T}\right):=\tilde{\mathbf{y}}-\mathbf{\Phi}_{T}\tilde{\mathbf{x}}\]
 and \[
\mathrm{pcoeff}^{\left(q\right)}\left(\hat{\mathbf{Y}},\mathbf{\Phi}_{T}\right):=\tilde{\mathbf{x}},\]
 where the superscript $\left(q\right)$ emphasizes that these definitions
are for the quantized case. This gives the modified SP algorithm.

The advantage of the modified algorithms are verified by the simulation
results presented in the next section.

\section{\label{sec:Empirical-Results}Empirical Results}

We performed extensive computer simulations in order to compare the
performance of different quantizers and different reconstruction algorithms
empirically. The parameters used in our simulations are $m=128$,
$N=256$ and $K=6$. Given these parameters, we generated realizations
of $m\times N$ sampling matrices from the i.i.d. standard Gaussian
ensemble and normalize the columns to have unit $l_{2}$-norm. We
also selected a support set $T$ of size $\left|T\right|=K$ uniformly
at random, generated the entries supported by $T$ from the standard
i.i.d. Gaussian distribution and set all other entries to zero. We
let the quantization rates vary from two to six bits. For each quantization
rate, we used Lloyd's algorithm (Section \ref{sub:Scalar-Quantization})
to obtain a nonuniform quantizer and employed brute-force search to
find the optimal uniform quantizer. To test different quantizers and
reconstruction algorithms, we randomly generated $\mathbf{\Phi}$
and $\mathbf{x}$ independently a thousand times. For each realization,
we calculated the measurements $\mathbf{Y}$, the quantized measurements
$\hat{\mathbf{Y}}$ and the reconstructed signal $\hat{\mathbf{X}}$.

Fig. \ref{fig:dist-measurements} compares uniform and uniform quantizers
with respect to measurement distortion. Though the quantization rates
in our experiments are relatively small, the simulation results are
consistent with the asymptotic results in Theorem \ref{thm:DRF-Gaussian-Matrix}:
nonuniform quantization is better than uniform quantization and the
gain increases with the quantization rate. Fig. \ref{fig:Rec-Dist-Standard}
compares the reconstruction distortion of the standard BP and SP algorithms.
The comparison of the modified algorithms is given in Fig. \ref{fig:dist-reconstruction}.
The modified algorithms reduce the reconstruction distortion significantly.
When the quantization rate is six bits, the reconstruction distortion
of the modified algorithms is roughly one tenth of that of the standard
algorithms. Furthermore, for both the standard and modified algorithms,
the reconstruction distortion given by SP algorithms is much smaller
than that of BP methods. Note that the computational complexity of
the SP algorithms is also smaller than that of the BP methods, which
shows clear advantages for using SP algorithms in conjunction with
quantized CS data. An interesting phenomenon occurs for the case of
the modified BP method: although nonuniform quantization gives smaller
measurement distortion, the corresponding reconstruction distortion
is actually slightly larger than that of uniform quantization. We
do not have solid analytical arguments to completely explain this
somewhat counter-intuitive fact.

\appendix

\subsection{\label{sub:Pf-Thm-Gaussian-scalar}Proof of Theorem \ref{thm:DRF-Gaussian-Matrix}}

Let $T=\left\{ 1\le j\le N:\; X_{j}\ne0\right\} $ be the support
set of $\mathbf{x}$, i.e., $x_{i}\ne0$ for all $i\in T$ and $x_{j}=0$
for all $j\notin T$. It is easy to show that for all $1\le i\le m$
and $T\subset\left\{ 1,\cdots,N\right\} $ such that $\left|T\right|=K$,
\[
\mathrm{E}\left[\sum_{j\in T}A_{i,j}X_{j}\right]=0\]
 and \[
\mathrm{E}\left[\left(\sum_{j\in T}A_{i,j}X_{j}\right)^{2}\right]=K.\]

According to the Central Limit Theorem, the distribution of $\frac{1}{\sqrt{K}}\sum_{j\in T}A_{i,j}X_{j}$
converges weakly to the standard Gaussian distribution as $K\rightarrow\infty$.
This can be verified by the facts that $A_{i,j}X_{j}$s are independent
and identically distributed, and that the moment generating function
of $A_{i,j}X_{j}$ is well defined. As a result, the distribution
of $\sqrt{\frac{m}{K}}Y_{i}$ converges weakly to the standard Gaussian
distribution as $K,m,N\rightarrow\infty$.

We apply a scalar quantizer with $2^{R}$ levels to the random variable
$\sqrt{\frac{m}{K}}Y_{i}$. In this case, one has \begin{align}
 & \frac{1}{K}\mathrm{E}\left[\left\Vert \hat{\mathbf{Y}}-\mathbf{Y}\right\Vert _{2}^{2}\right]\nonumber \\
 & =\frac{1}{m}\frac{m}{K}\mathrm{E}\left[\sum_{i=1}^{m}\left(\hat{Y}_{i}-Y_{i}\right)^{2}\right]\nonumber \\
 & =\frac{1}{m}\sum_{i=1}^{m}\mathrm{E}\left[\left(\sqrt{\frac{m}{K}}\hat{Y}_{i}-\sqrt{\frac{m}{K}}Y_{i}\right)^{2}\right]\nonumber \\
 & =\mathrm{E}\left[\left(\sqrt{\frac{m}{K}}\hat{Y}_{i}-\sqrt{\frac{m}{K}}Y_{i}\right)^{2}\right],\label{eq:T1-02}\end{align}
 where the last line represents the distortion of quantizing $\sqrt{\frac{m}{K}}Y_{i}$.
Note that the distortion-rate function for scalar quantization of
a Gaussian random variable is given by \begin{equation}
\underset{R\rightarrow\infty}{\lim}2^{2R}D_{g}^{*}\left(R\right)=\frac{\pi\sqrt{3}}{2}\sigma^{2},\label{eq:DRF-scalar-Gaussian}\end{equation}
 where $\sigma^{2}$ is the variance of the underlying Gaussian source
(see \cite{Zadar1964_thesis_nonuniform_quantization} for a detailed
proof of this result). We then have \[
\underset{R\rightarrow\infty}{\lim}\;\underset{\left(K,m,N\right)\rightarrow\infty}{\lim}\frac{2^{2R}}{K}D^{*}\left(R\right)=\underset{R\rightarrow\infty}{\lim}2^{2R}D_{g}^{*}\left(R\right)=\frac{\pi\sqrt{3}}{2},\]
 which completes the proof of (\ref{eq:DRF-Gaussian-non-uniform}).

Consider a uniform quantizer with codebook $\mathcal{C}_{u}$, such
that $\left|\mathcal{C}_{u}\right|=2^{R}$, and apply the corresponding
uniform quantizer to the random variable $\sqrt{\frac{m}{K}}Y_{i}$.
It was shown in \cite{Neuhoff2001_IT_optimal_uniform_scalar_quantization}
that the distortion-rate function of uniform scalar quantizers of
a Gaussian random variable equals \begin{equation}
\underset{R\rightarrow\infty}{\lim}\frac{2^{2R}}{R}D_{u,g}^{*}\left(R\right)=\frac{4}{3}\sigma^{2}\log2.\label{eq:DRF-uniform-scalar-Gaussian}\end{equation}
 It is clear that \[
\underset{R\rightarrow\infty}{\lim}\;\underset{\left(K,m,N\right)\rightarrow\infty}{\lim}\frac{2^{2R}}{KR}D_{u}^{*}\left(R\right)=\underset{R\rightarrow\infty}{\lim}\frac{2^{2R}}{R}D_{u,g}^{*}\left(R\right)=\frac{4}{3}\log2,\]
 This proves Theorem \ref{thm:DRF-Gaussian-Matrix}.

\subsection{\label{sub:Pf-lb-ub}Proof of Theorems \ref{thm:DRF-Matrix} and
\ref{thm:DRF-vector}}

For completeness, let us first briefly review the key results used
for deriving the asymptotic distortion-rate function for CS vector
quantization. Suppose the source $\mathbf{Y}\in\mathbb{R}^{k}$ has
probability density function $f\left(\mathbf{y}\right)$. Let $\mathcal{R}\subset\mathbb{R}^{k}$
be a quantization region and $\bm{\omega}\in\mathcal{C}$ be the corresponding
quantization level. The corresponding normalized moment of inertia
(NMI) is defined as \[
m\left(\mathcal{R}\right)=\frac{\frac{1}{k}\int_{\mathcal{R}}\left\Vert \mathbf{y}-\bm{\omega}\right\Vert _{2}^{2}f\left(\mathbf{y}\right)d\mathbf{y}}{\left(\int_{\mathcal{R}}d\mathbf{y}\right)^{1+2/k}}.\]
The optimal NMI equals \[
m_{k}^{*}=\underset{\mathcal{R}\subset\mathbb{R}^{k}}{\inf}m\left(\mathcal{R}\right),\]
 only depends on the number of dimensions: $m_{k}^{*}=c_{k}$ with
$c_{k}=\frac{1}{12}$ when $k=1$ and $c_{k}\rightarrow\frac{1}{2\pi e}$
when $k\rightarrow\infty$. Thus the distortion rate function satisfies
\begin{align}
\underset{R\rightarrow\infty}{\lim}\frac{2^{R}}{k}D\left(R\right) & =\int\frac{f\left(\mathbf{y}\right)}{\lambda_{k}^{2/k}\left(\mathbf{y}\right)}m_{k}^{*}d\mathbf{y},\label{eq:distortion-point-density-fn}\end{align}
 where $R$ is the quantization rate per dimension, and $\lambda_{k}\left(\mathbf{y}\right)$
denotes the point density function. In this case, the integral \[
\int_{\mathcal{M}}\lambda_{k}\left(\mathbf{y}\right)d\mathbf{y}\]
 gives the fraction of quantization levels belonging to $\mathcal{M}$
for all measurable sets $\mathcal{M}\subset\mathbb{R}^{k}$. For simplicity,
we have assumed that $\lambda_{k}\left(\mathbf{y}\right)$ is continuous
on $\mathbb{R}^{k}$. For fixed $m_{k}^{*}$, the problem of designing
an asymptotically optimal quantizer can be reduced to the problem
of finding the point density function $\lambda_{k}^{*}\left(\mathbf{y}\right)$
that minimizes (\ref{eq:distortion-point-density-fn}). By Hölder's
inequality, the optimal point density function is given by \[
\lambda_{k}^{*}\left(\mathbf{y}\right)=\frac{f^{k/\left(k+2\right)}\left(\mathbf{y}\right)}{\int f^{k/\left(k+2\right)}\left(\mathbf{y}\right)\cdot d\mathbf{y}},\]
 and the asymptotic distortion rate function is therefore \begin{align}
\underset{R\rightarrow\infty}{\lim}\frac{2^{R}}{k}D^{*}\left(R\right) & =c_{k}\left(\int f^{k/\left(k+2\right)}\left(\mathbf{y}\right)\cdot d\mathbf{y}\right)^{\frac{k+2}{k}}.\label{eq:DRF-in-point-density-fn}\end{align}
 If the source $\mathbf{Y}$ is Gaussian distributed with covariance
matrix $\mathbf{\Sigma}>0$, then the asymptotic distortion rate function
(\ref{eq:DRF-in-point-density-fn}) can be explicitly evaluated as
\begin{align}
\underset{R\rightarrow\infty}{\lim}\frac{2^{R}}{k}D^{*}\left(R\right) & =c_{k}\left|2\pi\mathbf{\Sigma}\right|^{\frac{1}{k}}\left(\frac{k+2}{k}\right)^{\frac{k+2}{2}}\label{eq:DRF-point-density-Gaussian}\\
 & =\left|\mathbf{\Sigma}\right|^{\frac{1}{k}}\left(1+o_{K}\left(1\right)\right),\nonumber \end{align}
 where $o_{K}\left(1\right)\rightarrow0$ as $K\rightarrow\infty$,
and the last equality follows from the fact that $c_{k}\rightarrow\frac{1}{2\pi e}$
and $\left(\frac{k+2}{2}\right)^{\frac{k+2}{2}}\rightarrow e$ as
$k\rightarrow\infty$.

We present next the key results used for proving the upper bounds
in (\ref{eq:DRF-non-uniform}) and (\ref{eq:ub-DRF-VQ-1}). 
\begin{prop}
\label{pro:ub-mismatch} Let $\mathbf{Y}_{0}\in\mathbb{R}^{k}$ be
a Gaussian random vector with zero mean and covariance matrix $\mathbf{\Sigma}_{0}$.
Let $\left\{ \mathfrak{q}_{R}\left(\cdot\right)\right\} $, where
the subscript $R$ denotes the quantization rate, be a sequence of
quantizers designed to achieve the asymptotic distortion rate function
for Gaussian source $\mathcal{N}\left(\mathbf{0},\mathbf{\Sigma}_{1}\right)$
with $\mathbf{0}<\mathbf{\Sigma}_{1}\in\mathbb{R}^{k\times k}$. Apply
$\mathfrak{q}_{R}\left(\cdot\right)$ to $\mathbf{Y}_{0}$. If $\mathbf{\Sigma}_{0}<\mathbf{\Sigma}_{1}$,
then\begin{align}
 & \underset{R\rightarrow\infty}{\lim}\frac{2^{2R}}{k}\mathrm{E}_{Y_{0}}\left[\left\Vert \mathbf{Y}_{0}-\mathfrak{q}_{R}\left(\mathbf{Y}_{0}\right)\right\Vert _{2}^{2}\right]\nonumber \\
 & \le c_{k}\left(2\pi\mathbf{\Sigma}_{1}\right)^{\frac{1}{k}}\left(\frac{k+2}{k}\right)^{\frac{k+2}{2}}.\label{eq:ub-03}\end{align}
 \end{prop}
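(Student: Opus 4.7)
The plan is to combine Bennett's high-resolution quantization formula with explicit Gaussian integration and a simultaneous-diagonalization argument. Since $\{\mathfrak{q}_R\}$ is designed to achieve the asymptotic DRF for $f_1\sim\mathcal{N}(\mathbf{0},\mathbf{\Sigma}_1)$, its asymptotic point density is $\lambda^*(\mathbf{y})\propto f_1(\mathbf{y})^{k/(k+2)}$, as derived in the passage preceding (\ref{eq:DRF-in-point-density-fn}). Applying the mismatched version of (\ref{eq:distortion-point-density-fn}) to the actual source density $f_0\sim\mathcal{N}(\mathbf{0},\mathbf{\Sigma}_0)$ gives
\begin{equation*}
\underset{R\to\infty}{\lim}\frac{2^{2R}}{k}\mathrm{E}_{Y_0}\left[\left\Vert\mathbf{Y}_0-\mathfrak{q}_R(\mathbf{Y}_0)\right\Vert_2^2\right]=c_k\left(\int f_1^{k/(k+2)}\,d\mathbf{y}\right)^{2/k}\int f_0(\mathbf{y})\,f_1(\mathbf{y})^{-2/(k+2)}\,d\mathbf{y}.
\end{equation*}

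Second, I would evaluate both integrals by standard Gaussian calculus. The first, already computed in the derivation of (\ref{eq:DRF-point-density-Gaussian}), equals $(2\pi)^{k/(k+2)}|\mathbf{\Sigma}_1|^{1/(k+2)}((k+2)/k)^{k/2}$. The second is a Gaussian integral with precision matrix $\mathbf{M}=\mathbf{\Sigma}_0^{-1}-\tfrac{2}{k+2}\mathbf{\Sigma}_1^{-1}$. The hypothesis $\mathbf{\Sigma}_0<\mathbf{\Sigma}_1$ yields $\mathbf{\Sigma}_0^{-1}>\mathbf{\Sigma}_1^{-1}>\tfrac{2}{k+2}\mathbf{\Sigma}_1^{-1}$, so $\mathbf{M}>\mathbf{0}$ and the integral is finite. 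Carrying out the algebra and normalizing by the matched distortion $c_k|2\pi\mathbf{\Sigma}_1|^{1/k}((k+2)/k)^{(k+2)/2}$, all factors of $2\pi$ and $|\mathbf{\Sigma}_1|$ cancel and one obtains
\begin{equation*}
\frac{\underset{R\to\infty}{\lim}\tfrac{2^{2R}}{k}\mathrm{E}_{Y_0}\left[\left\Vert\mathbf{Y}_0-\mathfrak{q}_R(\mathbf{Y}_0)\right\Vert_2^2\right]}{c_k|2\pi\mathbf{\Sigma}_1|^{1/k}((k+2)/k)^{(k+2)/2}}=\left(\frac{k}{k+2}\right)^{k/2}\left|\mathbf{I}-\tfrac{2}{k+2}\mathbf{\Sigma}_0\mathbf{\Sigma}_1^{-1}\right|^{-1/2},
\end{equation*}
so (\ref{eq:ub-03}) reduces to showing $\left|\mathbf{I}-\tfrac{2}{k+2}\mathbf{\Sigma}_0\mathbf{\Sigma}_1^{-1}\right|\ge(k/(k+2))^k$.

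Third, I would reduce this matrix inequality to a scalar one by the simultaneous congruence diagonalization of $\mathbf{\Sigma}_0$ and $\mathbf{\Sigma}_1$: there exists an invertible $\mathbf{V}$ with $\mathbf{V}^{-1}\mathbf{\Sigma}_0\mathbf{V}^{-T}=\mathbf{I}$ and $\mathbf{V}^{-1}\mathbf{\Sigma}_1\mathbf{V}^{-T}=\mathbf{\Lambda}=\mathrm{diag}(\lambda_1,\ldots,\lambda_k)$. Because $\mathbf{\Sigma}_0<\mathbf{\Sigma}_1$ pulls back to $\mathbf{I}<\mathbf{\Lambda}$, every $\lambda_i>1$, and $\mathbf{\Sigma}_0\mathbf{\Sigma}_1^{-1}=\mathbf{V}\mathbf{\Lambda}^{-1}\mathbf{V}^{-1}$ has eigenvalues $1/\lambda_i\in(0,1)$. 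Hence
\begin{equation*}
\left|\mathbf{I}-\tfrac{2}{k+2}\mathbf{\Sigma}_0\mathbf{\Sigma}_1^{-1}\right|=\prod_{i=1}^{k}\left(1-\tfrac{2}{(k+2)\lambda_i}\right)>\prod_{i=1}^{k}\tfrac{k}{k+2}=\left(\tfrac{k}{k+2}\right)^k,
\end{equation*}
which establishes the proposition (with equality precisely in the matched case $\mathbf{\Sigma}_0=\mathbf{\Sigma}_1$).

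The main technical obstacle is justifying Bennett's point-density formula in the mismatched regime; strictly, this requires that $\{\mathfrak{q}_R\}$ admits a well-defined limiting point density and that $\int f_0/\lambda^{*\,2/k}\,d\mathbf{y}<\infty$. The integrability condition is exactly where the assumption $\mathbf{\Sigma}_0<\mathbf{\Sigma}_1$ plays its essential role, via positivity of $\mathbf{M}$; the existence of the limiting point density is the standard regularity condition embedded in the phrase \emph{``designed to achieve the asymptotic distortion rate function''} and can be invoked from the quantization-theoretic literature. Once these are in place the remainder of the argument is an explicit Gaussian computation followed by a one-line eigenvalue estimate.
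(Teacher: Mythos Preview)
Your argument is essentially the paper's: both start from the mismatched Bennett formula with point density $\lambda^{*}\propto f_{1}^{k/(k+2)}$, compute $\int f_{0}\,f_{1}^{-2/(k+2)}$ as a Gaussian integral, and reduce the bound to the determinant inequality $\bigl|\mathbf{I}-\tfrac{2}{k+2}\mathbf{\Sigma}_{0}\mathbf{\Sigma}_{1}^{-1}\bigr|\ge\bigl(\tfrac{k}{k+2}\bigr)^{k}$. Your simultaneous-diagonalization step is a clean way to justify what the paper records as ``follows from the assumption $\mathbf{\Sigma}_{0}<\mathbf{\Sigma}_{1}$.''

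There is, however, one genuine gap. Your computation invokes $\mathbf{\Sigma}_{0}^{-1}$ (in the precision matrix $\mathbf{M}$) and your congruence $\mathbf{V}^{-1}\mathbf{\Sigma}_{0}\mathbf{V}^{-T}=\mathbf{I}$ forces $\mathbf{\Sigma}_{0}$ to be nonsingular. The proposition, as stated, allows $\mathbf{\Sigma}_{0}$ to be merely positive semidefinite with $\mathbf{\Sigma}_{0}<\mathbf{\Sigma}_{1}$, and this singular case is precisely the one needed in the paper's application to vector quantization, where $\mathbf{\Sigma}_{0}=\mathbf{\Phi}_{T}\mathbf{\Phi}_{T}^{*}$ has rank $K<m$. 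In that regime $f_{0}$ is not a density on $\mathbb{R}^{k}$ and your Gaussian integral does not exist as written. The paper closes this gap by a short limiting argument: replace $\mathbf{\Sigma}_{0}$ by $\mathbf{\Sigma}_{\epsilon}=\mathbf{\Sigma}_{0}+\epsilon\mathbf{I}$ (which satisfies $\mathbf{0}<\mathbf{\Sigma}_{\epsilon}<\mathbf{\Sigma}_{1}$ for small $\epsilon$), apply the nonsingular case, and pass to the limit $\epsilon\downarrow 0$ via Fatou's lemma. Adding that paragraph would make your proof complete.
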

\begin{proof}
First assume that $\mathbf{0}<\mathbf{\Sigma}_{0}$. Let $f_{0}\left(\mathbf{y}\right)$
and $f_{1}\left(\mathbf{y}\right)$ be the probability density functions
for $\mathbf{Y}_{0}$ and $\mathbf{Y}_{1}$, respectively. Denote
$\mathrm{E}_{Y_{0}}\left[\left\Vert \mathbf{Y}_{0}-\mathfrak{q}_{R}\left(\mathbf{Y}_{0}\right)\right\Vert _{2}^{2}\right]$
by $D\left(R\right)$. It is clear that \begin{align}
 & \underset{R\rightarrow\infty}{\lim}\frac{2^{R}}{k}D\left(R\right)\nonumber \\
 & =c_{k}\int\frac{f_{0}\left(\mathbf{y}\right)}{\left(\lambda_{k,1}^{*}\left(\mathbf{y}\right)\right)^{2/k}}d\mathbf{y}\nonumber \\
 & =c_{k}\int\frac{f_{0}\left(\mathbf{y}\right)}{f_{1}^{2/\left(k+2\right)}\left(\mathbf{y}\right)}d\mathbf{y}\cdot\left(\int f_{1}^{k/\left(k+2\right)}\left(\mathbf{y}\right)d\mathbf{y}\right)^{\frac{2}{k}}.\label{eq:ub-01}\end{align}
 We upper bound the first integral as follows \begin{align}
 & \int\frac{f_{0}\left(\mathbf{y}\right)}{f_{1}^{2/\left(k+2\right)}\left(\mathbf{y}\right)}d\mathbf{y}\nonumber \\
 & =\frac{\left|2\pi\mathbf{\Sigma}_{1}\right|^{\frac{1}{k+2}}}{\left|2\pi\mathbf{\Sigma}_{0}\right|^{1/2}}\int\exp\left\{ -\frac{1}{2}\mathbf{y}^{*}\left(\mathbf{\Sigma}_{0}^{-1}-\frac{2}{k+2}\mathbf{\Sigma}_{1}^{-1}\right)\mathbf{y}\right\} d\mathbf{y}\nonumber \\
 & \overset{\left(a\right)}{=}\frac{\left|2\pi\mathbf{\Sigma}_{1}\right|^{\frac{1}{k+2}}}{\left|2\pi\mathbf{\Sigma}_{0}\right|^{1/2}}\frac{\left|2\pi\bm{\Sigma}_{0}\right|^{1/2}}{\left|\mathbf{I}_{k}-\frac{2}{k+2}\bm{\Sigma}_{0}\bm{\Sigma}_{1}^{-1}\right|^{1/2}}\nonumber \\
 & \overset{\left(b\right)}{\le}\left|2\pi\bm{\Sigma}_{1}\right|^{\frac{1}{k+2}}\left(\frac{k+2}{k}\right)^{\frac{k}{2}}\nonumber \\
 & =\int f_{1}^{\frac{k}{k+2}}\left(\bm{x}\right)d\bm{x},\label{eq:ub-02}\end{align}
 where $\left(a\right)$ holds because \begin{align*}
 & \mathbf{\Sigma}_{0}^{-1}-\frac{2}{k+2}\bm{\Sigma}_{1}^{-1}\\
 & =\bm{\Sigma}_{0}^{-1}\left(\mathbf{I}_{k}-\frac{2}{k+2}\bm{\Sigma}_{0}\bm{\Sigma}_{1}^{-1}\right)\\
 & =\left[\left(\mathbf{I}_{k}-\frac{2}{k+2}\bm{\Sigma}_{0}\bm{\Sigma}_{1}^{-1}\right)^{-1}\bm{\Sigma}_{0}\right]^{-1},\end{align*}
 and $\left(b\right)$ follows from the assumption $\mathbf{\Sigma}_{0}<\mathbf{\Sigma}_{1}$.
Substituting (\ref{eq:ub-02}) into (\ref{eq:ub-01}), one obtains\begin{align*}
 & \underset{R\rightarrow\infty}{\lim}\frac{2^{R}}{k}D\left(R\right)\\
 & \le c_{k}\left(\int f_{1}^{k/\left(k+2\right)}\left(\mathbf{y}\right)d\mathbf{y}\right)^{\frac{k+2}{k}}\\
 & =c_{k}\left|2\pi\mathbf{\Sigma}_{1}\right|^{\frac{1}{k}}\left(\frac{k+2}{k}\right)^{\frac{k+2}{2}},\end{align*}
 which will be used to prove the upper bounds in (\ref{eq:DRF-non-uniform})
and (\ref{eq:ub-DRF-VQ-1}).

Suppose that $\left|\mathbf{\Sigma}_{0}\right|=0$ (some of the eigenvalues
of $\mathbf{\Sigma}_{0}$ are zero). Since $\mathbf{\Sigma}_{0}<\mathbf{\Sigma}_{1}$,
when $\epsilon>0$ is sufficiently small, we have $\mathbf{0}<\mathbf{\Sigma}_{\epsilon}:=\mathbf{\Sigma}_{0}+\epsilon\mathbf{I}<\mathbf{\Sigma}_{1}$.
Let $f_{\epsilon}\left(\mathbf{y}\right)$ be the probability density
function of Gaussian vector with zero mean and variance $\mathbf{\Sigma}_{\epsilon}$.
Then, \begin{align*}
 & \underset{R\rightarrow\infty}{\lim}\frac{2^{R}}{k}D\left(R\right)\\
 & =c_{k}\int\frac{f_{0}\left(\mathbf{y}\right)}{\left(\lambda_{k,1}^{*}\left(\mathbf{y}\right)\right)^{2/k}}d\mathbf{y}\\
 & =c_{k}\int\frac{\underset{\epsilon\rightarrow0}{\lim}f_{\epsilon}\left(\mathbf{y}\right)}{\left(\lambda_{k,1}^{*}\left(\mathbf{y}\right)\right)^{2/k}}d\mathbf{y}\\
 & \overset{\left(c\right)}{\le}c_{k}\underset{\epsilon\rightarrow0}{\lim\inf}\int\frac{f_{\epsilon}\left(\mathbf{y}\right)}{\left(\lambda_{k,1}^{*}\left(\mathbf{y}\right)\right)^{2/k}}d\mathbf{y}\\
 & \overset{\left(d\right)}{\le}c_{k}\left|2\pi\mathbf{\Sigma}_{1}\right|^{\frac{1}{k}}\left(\frac{k+2}{k}\right)^{\frac{k+2}{2}},\end{align*}
 where $\left(c\right)$ follows from Fatou's lemma \cite{Royden1988_real_analysis},
and $\left(d\right)$ follows from the first part of this proof. This
proves the proposition. 
\end{proof}

\subsubsection{Lower Bounds for Scalar Quantization}

$ $

We prove the lower bound in (\ref{eq:DRF-non-uniform}). Given Assumptions
II, each $Y_{i}$, $1\le i\le m$, is a linear combination of Gaussian
random variables, and therefore each $Y_{i}$ is a Gaussian random
variable itself. For a given $i$ and a given $T$, the mean and the
variance of $Y_{i}$ are $\mathrm{E}\left[Y_{i}\right]=0$ and $\sigma_{i,T}^{2}=\mathrm{E}\left[Y_{i}^{2}\right]=\sum_{j\in T}\varphi_{i,j}^{2}$,
respectively. The variance depends on the row index $i$ and the support
set $T$. We calculate the average variance across all rows and all
support sets as \begin{align}
\bar{\sigma}^{2} & =\frac{1}{m}\sum_{i=1}^{m}\left(\frac{1}{{N \choose K}}\sum_{T}\;\sum_{j\in T}\varphi_{i,j}^{2}\right)\nonumber \\
 & =\frac{1}{m}\frac{1}{{N \choose K}}\sum_{T}\;\sum_{j\in T}\left(\sum_{i=1}^{m}\varphi_{i,j}^{2}\right)\nonumber \\
 & \overset{\left(a\right)}{=}\frac{1}{m}\frac{1}{{N \choose K}}\sum_{j=1}^{N}\left(\sum_{T:\; j\in T}\left\Vert \bm{\varphi}_{j}\right\Vert _{2}^{2}\right)\nonumber \\
 & \overset{\left(b\right)}{=}\frac{1}{m}\frac{1}{{N \choose K}}\sum_{j=1}^{N}{N-1 \choose K-1}\left\Vert \bm{\varphi}_{j}\right\Vert _{2}^{2}\nonumber \\
 & \overset{\left(c\right)}{=}\frac{K}{m}\frac{1}{N}\sum_{j=1}^{N}\left\Vert \bm{\varphi}_{j}\right\Vert _{2}^{2}\nonumber \\
 & \overset{\left(d\right)}{=}\frac{K}{m}\mu_{1},\label{eq:sigma-bar}\end{align}
 where 
\begin{description}
\item [{{$\left(a\right)$}}] is obtained by exchanging the sums over
$T$ and $j$, 
\item [{{$\left(b\right)$}}] holds because for any given $1\le j\le N$,
there are ${N-1 \choose K-1}$ many subsets $T$ containing the index
$j$, 
\item [{{$\left(c\right)$}}] is due to the fact that ${N-1 \choose K-1}/{N \choose K}=K/N$, 
\item [{{$\left(d\right)$}}] follows from the definition (\ref{eq:def-mu-1}). 
\end{description}
\vspace{0.01in}

Suppose that one deals with the ideal case: the support set $T$ is
known before taking the measurements; and for different values of
$i$ and $T$, we are allowed to use different quantizers. Given $i$
and $T$, we apply the optimal quantizer for the Gaussian random variable
$\sqrt{\frac{m}{K}}Y_{i}$, so that the quantization distortion of
$Y_{i}$ satisfies \[
\underset{R\rightarrow\infty}{\lim}2^{2R}D_{i,T}^{*}\left(R\right)=\frac{\pi\left(\frac{m}{K}\sigma_{i,T}^{2}\right)}{2}\sqrt{3},\]
 which is a direct application of (\ref{eq:DRF-in-point-density-fn})
with $k=1$. Taking the average over all $i$ and all $T$ gives \begin{align*}
 & \underset{R\rightarrow\infty}{\lim}\frac{1}{m}\sum_{i=1}^{m}\mathrm{E}_{T}\left[2^{2R}D_{i,T}^{*}\left(R\right)\right]\\
 & =\frac{1}{m}\sum_{i=1}^{m}\frac{1}{{T \choose K}}\sum_{T}\left(\underset{R\rightarrow\infty}{\lim}2^{2R}D_{i,T}^{*}\left(R\right)\right)\\
 & =\frac{1}{m}\frac{1}{{T \choose K}}\sum_{i=1}^{m}\sum_{T}\left(\frac{\pi\left(\frac{m}{K}\sigma_{i,T}^{2}\right)}{2}\sqrt{3}\right)\\
 & =\frac{\pi\mu_{1}}{2}\sqrt{3},\end{align*}
 where the last equality follows from (\ref{eq:sigma-bar}).

However, the support set $T$ is unknown before taking the measurements.
Furthermore, the same quantizer has to be employed for different choices
of $i$ and $T$. Thus, for every $R$, $i$ and $T$, $\mathrm{E}_{Y_{i}}\left[\frac{m}{K}\left|Y_{i}-\hat{Y}_{i}\right|^{2}\right]\ge D_{i,T}^{*}\left(R\right)$.
As a result,\begin{align*}
 & \underset{R\rightarrow\infty}{\lim\inf}\frac{2^{2R}}{K}\mathrm{E}_{T}\left[\mathrm{E}_{Y}\left[\left\Vert \hat{\mathbf{Y}}-\mathbf{Y}\right\Vert _{2}^{2}\right]\right]\\
 & =\underset{R\rightarrow\infty}{\lim\inf}\frac{2^{2R}}{{N \choose T}}\sum_{T}\frac{1}{m}\sum_{i=1}^{m}\frac{m}{K}\mathrm{E}_{Y}\left[\left(\hat{y}_{i}-y_{i}\right)^{2}\right]\\
 & \ge\underset{R\rightarrow\infty}{\lim\inf}\frac{2^{2R}}{{N \choose T}}\sum_{T}\frac{1}{m}\sum_{i=1}^{m}D_{i,T}^{*}\left(R\right)\\
 & =\frac{\pi\mu_{1}}{2}\sqrt{3}.\end{align*}
 Since the above derivation is valid for all $K$, $m$ and $N$,
the claim in (\ref{eq:DRF-non-uniform}) holds.

The result in (\ref{eq:DRF-uniform}) for uniform quantizers can be
proved using similar arguments. For the ideal case, given $i$ and
$T$, apply the optimal \emph{uniform} quantizer for the standard
Gaussian random variable to $\sqrt{\frac{m}{K}}y_{i}$. The corresponding
distortion rate function for this case was characterized in \cite{Neuhoff2001_IT_optimal_uniform_scalar_quantization}
and s given by \[
\underset{R\rightarrow\infty}{\lim}2^{2R}D_{u,i,T}^{*}\left(R\right)=\frac{4}{3}\sigma_{i,T}^{2}\ln2.\]
 Therefore, \begin{align*}
 & \underset{R\rightarrow\infty}{\lim\inf}\frac{2^{2R}}{K}\mathrm{E}_{T}\left[\mathrm{E}_{Y}\left[\left\Vert \hat{\mathbf{Y}}-\mathbf{Y}\right\Vert _{2}^{2}\right]\right]\\
 & \ge\frac{4}{3}\mu_{1}\ln2,\end{align*}
 which completes the proof of (\ref{eq:DRF-uniform}).

\subsubsection{The Upper Bound for Scalar Quantization}

$ $

By the definition of $\mu_{2}$ in (\ref{eq:def-mu-2}), the variance
of the Gaussian random variable $\sqrt{\frac{m}{K}}Y_{i}$ is upper
bounded by $\mu_{2}$ uniformly for all $i$ and all $T$. For each
quantization rate $R$, we design the optimal quantizer for a Gaussian
source with variance $\mu_{2}$ and apply this quantizer to quantize
all components of $\mathbf{Y}$. Using (\ref{eq:ub-03}), one can
show that the quantization distortion for all $i$ and $T$ satisfies\begin{align*}
 & \underset{R\rightarrow\infty}{\lim\sup}\frac{2^{2R}}{K}\mathrm{E}_{T}\mathrm{E}_{\mathbf{Y}}\left[\left\Vert \hat{\mathbf{Y}}-\mathbf{Y}\right\Vert _{2}^{2}\right]\\
 & \le\frac{\pi}{2}\mu_{2}\sqrt{3},\end{align*}
 which proves the upper bound in (\ref{eq:DRF-non-uniform}).

\subsubsection{The Lower Bound for Vector Quantization}

$ $

The basic idea for proving the lower bound in (\ref{eq:lb-DRF-VQ})
is similar to that behind (\ref{eq:DRF-non-uniform}). For each $T$,
a lower bound on the minimum achievable distortion is derived. The
average distortion taken over all the sets $T$ serves as a lower
bound of the overall distortion-rate function.

Suppose the ideal case where we have prior knowledge of $T\in{\left[N\right] \choose K}$.
We study the distortion rate function for every given $T$. The measurement
vector $\mathbf{Y}$ is Gaussian distributed with zero mean and covariance
matrix $\mathbf{\Phi}_{T}\mathbf{\Phi}_{T}^{*}$, where $\mathbf{\Phi}_{T}$
consists of the columns of $\mathbf{\Phi}$ indexed by $T$. The singular
value decomposition of $\mathbf{\Phi}_{T}\mathbf{\Phi}_{T}^{*}$ gives
$\mathbf{U}_{T}\mathbf{\Lambda}_{T}\mathbf{U}_{T}^{*}$, where $\mathbf{U}_{T}\in\mathbb{R}^{m\times m}$
has orthonormal columns and $\mathbf{\Lambda}_{T}=\mathrm{diag}\left(\lambda_{1},\lambda_{2},\cdots,\lambda_{m}\right)$
is the diagonal matrix formed by the singular values $\lambda_{1}\ge\lambda_{2}\ge\cdots\ge\lambda_{m}$.
Note that $\lambda_{i}\left(\mathbf{\Phi}_{T}^{*}\mathbf{\Phi}_{T}\right)=\lambda_{i}\left(\mathbf{\Phi}_{T}\mathbf{\Phi}_{T}^{*}\right)$
for $1\le i\le K$. According to Assumption III.1, the measurement
matrix $\mathbf{\Phi}$ satisfies the RIP with constant parameter
$\delta_{K}$ , which implies that $1-\delta_{K}\le\lambda_{i}\left(\mathbf{\Phi}_{T}^{*}\mathbf{\Phi}_{T}\right)\le1+\delta_{K}$
for all $1\le i\le K$. It can be concluded that $1-\delta_{K}\le\lambda_{i}\le1+\delta_{K}$
for $1\le i\le K$ and $\lambda_{i}=0$ for $K+1\le i\le m$. As a
result, $\mathbf{\Phi}_{T}\mathbf{\Phi}_{T}^{*}=\mathbf{U}_{T,K}\mathbf{\Lambda}_{T,K}\mathbf{U}_{T,K}^{*}$
where $\mathbf{U}_{T,K}\in\mathbb{R}^{m\times K}$ contains the first
$K$ columns of $\mathbf{U}_{T}$ and $\mathbf{\Lambda}_{T,K}\in\mathbb{R}^{K\times K}$
is the diagonal matrix formed by the $K$ largest singular values.
Denote the matrix formed by the last $m-K$ columns of $\mathbf{U}$
by $\mathbf{U}_{T,K}^{\perp}$: clearly, $\mathbf{U}_{T}=\left[\mathbf{U}_{T,K}|\mathbf{U}_{T,K}^{\perp}\right]$.

The best quantization strategy is to quantize $\bar{\mathbf{Y}}=\mathbf{U}_{T,K}^{*}\mathbf{Y}$
so that no quantization bit is used for the {}``trivial signal''
$\left(\mathbf{U}_{T,K}^{\perp}\right)^{*}\mathbf{Y}$. It is clear
that $\bar{\mathbf{Y}}\sim\mathcal{N}\left(\mathbf{0},\mathbf{\Lambda}_{T,K}\right)$
and $\mathbf{0}<\mathbf{\Lambda}_{T,K}$. The corresponding asymptotic
distortion rate function is therefore \begin{align*}
\underset{R\rightarrow\infty}{\lim}\frac{2^{2mR/K}}{K}D_{T}^{*}\left(R\right) & \overset{\left(\ref{eq:DRF-point-density-Gaussian}\right)}{=}c_{K}\left(2\pi\mathbf{\Lambda}_{T,K}\right)^{\frac{1}{K}}\left(\frac{K+2}{K}\right)^{\frac{K+2}{2}}\\
 & \ge\left(1-\delta_{K}\right)\left(1+o_{K}\left(1\right)\right),\end{align*}
 where the $2^{2mR/K}$ term comes from the fact that the total quantization
rate $mR$ is used to quantize a $K$-dimensional signal. Since this
lower bound is valid for all $T\in{\left[N\right] \choose K}$, we
have proved the lower bound in (\ref{eq:lb-DRF-VQ}).

\subsubsection{The Upper Bound for Vector Quantization}

$ $

Let $\epsilon>0$ be a small constant. Let $\left\{ \mathfrak{q}_{R}\left(\cdot\right)\right\} $
be a sequence of quantizers that approaches the asymptotic distortion
rate function for quantizing $\bar{\mathbf{Y}}\sim\mathcal{N}\left(\mathbf{0},\left(1+\delta_{K}+\epsilon\right)\mathbf{I}_{m}\right)$.
To prove the upper bound in (\ref{eq:ub-DRF-VQ-1}), apply the quantizer
sequence $\left\{ \mathfrak{q}_{R}\left(\cdot\right)\right\} $ to
$\mathbf{Y}$. For every $T\in{\left[N\right] \choose K}$, $\mathbf{Y}\sim\mathcal{N}\left(\mathbf{0},\mathbf{\Phi}_{T}\mathbf{\Phi}_{T}^{*}\right)$.
According to the Assumption III.1, $\mathbf{\Phi}_{T}\mathbf{\Phi}_{T}^{*}<\left(1+\delta_{K}+\epsilon\right)\mathbf{I}_{m}$.
Applying Proposition \ref{pro:ub-mismatch}, we have \begin{align*}
 & \underset{R\rightarrow\infty}{\lim}\frac{2^{2R}}{m}\mathrm{E}_{\mathbf{Y}}\left[\left\Vert \mathbf{Y}-\mathfrak{q}_{R}\left(\mathbf{Y}\right)\right\Vert _{2}^{2}\right]\\
 & \le\left(1+\delta_{K}+\epsilon\right)\left(1+o_{M}\left(1\right)\right).\end{align*}
 The upper bound in (\ref{eq:ub-DRF-VQ-1}) is proved by taking the
limit $\epsilon\downarrow0$.

\subsection{\label{sub:pf-well-define}The Existence and Uniqueness of $\left(\tilde{\mathbf{x}},\tilde{\mathbf{y}}\right)$
in Equation (\ref{eq:def-proj-q-solution})}

Consider the optimization problem \begin{equation}
\underset{\left(\mathbf{x},\mathbf{y}\right)\in\mathbb{R}^{\left|T\right|}\times\mathcal{R}_{\hat{\mathbf{Y}}}}{\min}\;\left\Vert \mathbf{y}-\mathbf{\Phi}_{T}\mathbf{x}\right\Vert _{2},\label{eq:well-def-01}\end{equation}
 which is equivalent to\begin{equation}
\underset{\left(\mathbf{x},\mathbf{y}\right)\in\mathbb{R}^{\left|T\right|}\times\mathcal{R}_{\hat{\mathbf{Y}}}}{\min}\;\left\Vert \left[-\mathbf{\Phi}_{T}\;\mathbf{I}\right]\left[\begin{array}{c}
\mathbf{x}\\
\mathbf{y}\end{array}\right]\right\Vert _{2}^{2}.\label{eq:well-def-02}\end{equation}
 Note that the objective function is convex and the constraint set
is convex and closed. The optimization problem (\ref{eq:well-def-02})
has at least one solution. Note that the matrix $\left[-\mathbf{\Phi}_{T}\;\mathbf{I}\right]$
does not have full row-rank. Hence, the solution may not be unique:
the set $\mathcal{Q}$ defined in (\ref{eq:def-proj-q-plane}) gives
all the possible solutions, and is convex and closed.

Let $\mathfrak{P}$ be the projection function from $\mathbb{R}^{\left|T\right|}\times\mathbb{R}^{m}$
to $\mathbb{R}^{m}$, i.e., $\mathfrak{P}\left(\left(\mathbf{x},\mathbf{y}\right)\right)=\mathbf{y}$.
Since the set $\mathcal{Q}$ is convex, the set $\mathfrak{P}\left(\mathcal{Q}\right)$
is also convex. The quadratic optimization problem \[
\underset{\mathbf{y}\in\mathfrak{P}\left(\mathcal{Q}\right)}{\min}\left\Vert \hat{\mathbf{Y}}-\mathbf{y}\right\Vert _{2}\]
 has a unique solution. Denote this unique solution by $\tilde{\mathbf{y}}$.
Furthermore, recall our assumption that $\mathbf{\Phi}_{T}$ has full
column rank. For any given $\mathbf{y}\in\mathbb{R}^{m}$, the solution
of \[
\underset{\mathbf{x}\in\mathbb{R}^{\left|T\right|}}{\min}\;\left\Vert \mathbf{y}-\mathbf{\Phi}_{T}\mathbf{x}\right\Vert _{2}\]
 is therefore unique. As a result, there exists a unique $\tilde{\mathbf{x}}\in\mathbb{R}^{\left|T\right|}$
such that $\left(\tilde{\mathbf{x}},\tilde{\mathbf{y}}\right)\in\mathcal{Q}$.
This establishes the existence and uniqueness of the point $\left(\tilde{\mathbf{x}},\tilde{\mathbf{y}}\right)$.

\bibliographystyle{ieeetr}
\bibliography{Quantization}

\begin{thebibliography}{10}

\bibitem{Donoho_IT2006_CompressedSensing}
D.~Donoho, ``Compressed sensing,'' {\em IEEE Trans. Inform. Theory}, vol.~52,
  no.~4, pp.~1289--1306, 2006.

\bibitem{Candes_Tao_IT2005_decoding_linear_programming}
E.~Cand{\`e}s and T.~Tao, ``Decoding by linear programming,'' {\em Information
  Theory, IEEE Transactions on}, vol.~51, no.~12, pp.~4203--4215, 2005.

\bibitem{Candes_Tao_FOCS05_Error_Correction_Linear_Programming}
E.~Cand{\`e}s, M.~Rudelson, T.~Tao, and R.~Vershynin, ``Error correction via
  linear programming,'' in {\em IEEE Symposium on Foundations of Computer
  Science (FOCS)}, pp.~295 -- 308, 2005.

\bibitem{Candes_Tao_IT2006_Near_Optimal_Signal_Recovery}
E.~J. Cand{\`e}s and T.~Tao, ``Near-optimal signal recovery from random
  projections: Universal encoding strategies?,'' {\em IEEE Trans. Inform.
  Theory}, vol.~52, no.~12, pp.~5406--5425, 2006.

\bibitem{Dai_2008_Subspace_Pursuit}
W.~Dai and O.~Milenkovic, ``Subspace pursuit for compressive sensing signal
  reconstruction,'' {\em IEEE Trans. Inform. Theory}, accepted, 2008.

\bibitem{Tropp2008_CoSamp}
D.~Needell and J.~A. Tropp, ``{CoSaMP}: Iterative signal recovery from
  incomplete and inaccurate samples,'' {\em Appl. Comp. Harmonic Anal.},
  accepted, 2008.

\bibitem{Candes_Tao_ApplMath2006_Stable_Signal_Recovery}
E.~J. Cand{\`e}s, J.~K. Romberg, and T.~Tao, ``Stable signal recovery from
  incomplete and inaccurate measurements,'' {\em Comm. Pure Appl. Math.},
  vol.~59, no.~8, pp.~1207--1223, 2006.

\bibitem{Candes2006_DCC_encoding_lp_ball}
E.~Cand{\`e}s and J.~Romberg, ``Encoding the $\ell_p$ ball from limited
  measurements,'' {\em Data Compression Conference}, pp.~33--42, March 2006.

\bibitem{Baraniuk2008_quantization_sparse_representations}
P.~Boufounos and R.~Baraniuk, ``Quantization of sparse representations,'' {\em
  Preprint}, 2008.

\bibitem{Braniuk2008_ciss_1bitquantization}
P.~Boufounos and R.~G. Baraniuk, ``1-bit compressive sensing,'' in {\em Conf.
  on Info. Sciences and Systems (CISS)}, (Princeton, NJ), pp.~16--21, March
  2008.

\bibitem{Goyal2008_SPM_quantized_CS}
V.~Goyal, A.~Fletcher, and S.~Rangan, ``Compressive sampling and lossy
  compression,'' {\em IEEE Signal Processing Magazine}, vol.~25, pp.~48--56,
  March 2008.

\bibitem{Nesterov_book1994_Interior_point_Convex_Programming}
I.~E. Nesterov, A.~Nemirovskii, and Y.~Nesterov, {\em Interior-Point Polynomial
  Algorithms in Convex Programming}.
\newblock SIAM, 1994.

\bibitem{Lloyd1982_IT_quantization}
S.~Lloyd, ``Least squares quantization in pcm,'' {\em Information Theory, IEEE
  Transactions on}, vol.~28, pp.~129--137, Mar 1982.

\bibitem{Sayood2005_book_data_compression}
K.~Sayood, {\em Introduction to Data Compression}.
\newblock Morgan Kaufmann, 3rd edition~ed., 2005.

\bibitem{Cover1991_book_information_theory}
T.~M. Cover and J.~A. Thomas, {\em Elements of Information Theory}.
\newblock John Wiley \& Sons, 1st edition~ed., 1991.

\bibitem{Tarokh2007_ISIT_CS_ReedSolomon}
M.~Akcakaya and V.~Tarokh, ``On sparsity, redundancy and quality of frame
  representations,'' pp.~951--955, June 2007.

\bibitem{Shokrollahi2009_bit_precision_CS}
E.~Ardestanizadeh, M.~Cheraghchi, and A.~Shokrollahi, ``Bit precision analysis
  for compressed sensing,'' {\em Preprint}, 2009.

\bibitem{Zadar1964_thesis_nonuniform_quantization}
P.~Zador, {\em Development and evaluation of procedures for quantizing
  multivariate distributions}.
\newblock PhD thesis, Stanford University, Stanford, CA, 1964.

\bibitem{Neuhoff2001_IT_optimal_uniform_scalar_quantization}
D.~Hui and D.~Neuhoff, ``Asymptotic analysis of optimal fixed-rate uniform
  scalar quantization,'' {\em IEEE Trans. Inform. Theory}, vol.~47,
  pp.~957--977, Mar 2001.

\bibitem{Royden1988_real_analysis}
H.~Royden, {\em Real Analysis}.
\newblock Prentice Hall, 3 edition~ed., 1988.

\end{thebibliography}

\newpage{}

\begin{figure}
\includegraphics[scale=0.6]{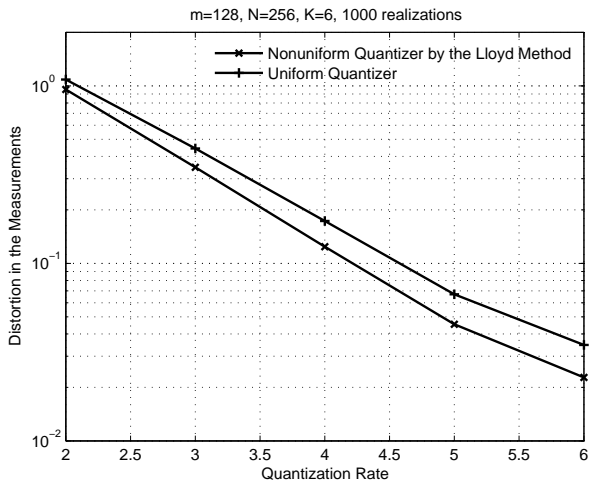}

\caption{\label{fig:dist-measurements}Distortion in the measurements.}

\end{figure}

\begin{figure}
\subfloat[\label{fig:Rec-Dist-Standard}By standard reconstruction algorithms]{\includegraphics[scale=0.6]{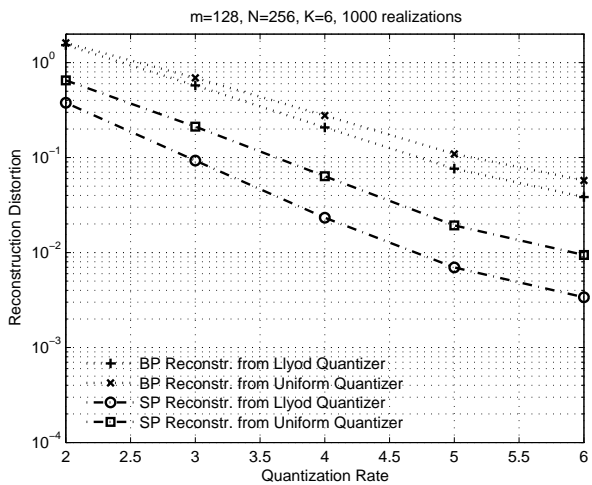}

}

\subfloat[\label{fig:Rec-Dist-Modified}By modified reconstruction algorithms]{\includegraphics[scale=0.6]{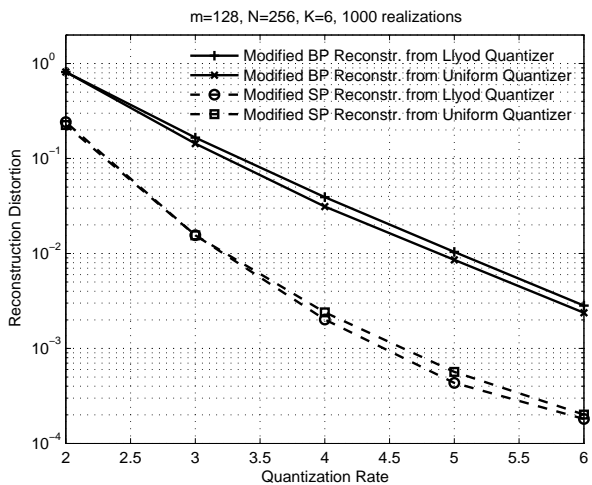}

}

\caption{\label{fig:dist-reconstruction}Distortion in the reconstruction signals. }

\end{figure}

\end{document}